\newtheorem{lemma}{Lemma}
\newtheorem{definition}{Definition}
\newtheorem{remark}{Remark}
\newtheorem{theorem}{Theorem}
\newtheorem{corollary}{Corollary}
\newtheorem{algorithm}{Algorithm}
\newtheorem{assumption}{Assumption}
\newcommand{\fragsize}{0.8}
\newcommand{\rate}{p}
\newcommand{\optrate}{p^*}
\newcommand{\maxI}{\gamma}
\newcommand{\onevec}{\mathbf{1}}
\newcommand{\nomI}{\mathcal{I}^0}
\newcommand{\eg}{\emph{e.g.}}
\newcommand{\ie}{\emph{i.e.}}
\newcommand{\bi}{{(i)}}
\newcommand{\bim}{{(i-1)}}
\newcommand{\bie}{{(1)}}
\newcommand{\bir}{{(\rate)}}
\newcommand{\biro}{{(\optrate)}}
\newcommand{\birl}{{(\rate_\ell)}}
\newcommand{\loopset}{\mathbb{L}}
\newcommand{\edit}[1]{{#1}}
\newcommand{\editt}[1]{{#1}}
\begin{document}

\title{Multiple Loop Self-Triggered Model Predictive Control for Network Scheduling and Control}

\author{Erik Henriksson, Daniel E. Quevedo, Edwin G.W. Peters, Henrik Sandberg, Karl Henrik Johansson
\thanks{E. Henriksson, H. Sandberg and K.H. Johansson are with ACCESS Linnaeus Centre, School of Electrical Engineering, Royal Institute of Technology, 10044 Stockholm,
        Sweden. e-mail:~\{erike02,~hsan,~kallej\}@ee.kth.se}
\thanks{D.E. Quevedo and E.G.W. Peters are with The University of Newcastle, NSW 2308, Australia. e-mail:~dquevedo@ieee.org,~edwin.g.w.peters@gmail.com.}
\thanks{This work was supported by the Swedish Governmental Agency
        for Innovation Systems through the WiComPi project, the Swedish Research Council under
        Grants 2007-6350 and 2009-4565, the Knut and Alice Wallenberg
        Foundation, and the Australian Research Council’s Discovery Projects funding scheme (project number DP0988601).}}

\maketitle

\begin{abstract}
We present an algorithm for controlling \edit{and scheduling} multiple linear time-invariant
processes on a shared \edit{bandwidth limited} communication network using adaptive sampling intervals. 
\edit{The controller is centralized and computes at every sampling instant not only the new control command for a process, but also decides the time interval to wait until taking the next sample.} 
\edit{The
approach relies on model predictive control ideas, where the cost function penalizes the state and control effort as well as the time interval until the next sample is taken.}
 The latter is
introduced in order to generate an adaptive sampling scheme for the overall
system such that the sampling time increases as the \editt{norm of the system state goes to zero}. The paper presents a method for synthesizing such a predictive
controller and gives explicit sufficient conditions for when it is
stabilizing. Further explicit conditions are given which guarantee conflict
free transmissions on the network. It is shown that the optimization problem
may be solved off-line and that the controller can be implemented as a lookup
table of state feedback gains. \edit{Simulation studies which compare the proposed algorithm to periodic sampling illustrate potential performance gains.}
\end{abstract}

\begin{IEEEkeywords}
Predictive Control; Networked Control Systems; Process Control; Stability;
Scheduling; Self-triggered Control.
\end{IEEEkeywords}

\section{Introduction}
Wireless sensing and control systems have received increased attention in the
process industry over the last years. Emerging technologies in low-power
wake-up radio enables engineering of a new type of industrial automation
systems where sensors, controllers and actuators communicate over a wireless
channel. The introduction of a wireless medium in the control loop gives rise
to new challenges which need to be handled \cite{will08}. The aim of this paper
is to address the problem of how the medium access to the wireless channel
could be divided between the loops, taking the process dynamics into
consideration. 
\edit{We investigate possibilities to design a self-triggered controller, that adaptively chooses the sampling period for multiple control loops. The aim is to reduce the amount of generated network traffic, while maintaining a guaranteed level of performance in respect of driving the initial system states to zero and the control effort needed.
}

Consider the networked control system in Fig.~\ref{fig:NCS}, which shows how
the sensors and the controller are connected through a wireless network. The
wireless network is controlled by a \emph{Network Manager} which allocates
medium access to the sensors and triggers their transmissions. This setup is
motivated by current industry standards \edit{based on the IEEE 802.15.4 standard}, \eg, \cite{whart,ieee:802.15.4-2011:standard}, which utilizes this
structure for wireless control in process industry. Here the triggering is in
turn generated by the controller which, in addition to computing the
appropriate control action, dynamically determines the time of the next sample
by a self-triggering approach \cite{vel+03}. In doing so the controller gives
varying attention to the loops depending on their state, while trying to
communicate only few samples. To achieve this the controller must, for every
loop, trade control performance against inter sampling time and give a
quantitative measure of the resulting performance.
\begin{figure}[tb]
    \centering
    \psfrag{C}[][][1]{$\mathcal{C}$}
    \psfrag{P}[][][1]{$\mathcal{P}_1$}
    \psfrag{S}[][][1]{$\mathcal{S}_1$}
    \psfrag{A}[][][1]{$\mathcal{A}_1$}
    \psfrag{Q}[][][1]{$\mathcal{P}_s$}
    \psfrag{T}[][][1]{$\mathcal{S}_s$}
    \psfrag{B}[][][1]{$\mathcal{A}_s$}
    \includegraphics[width=0.85\columnwidth]{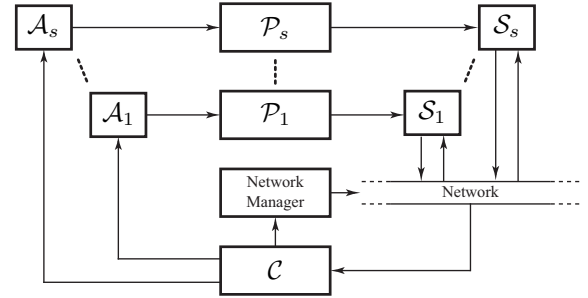}\\
    \caption{Actuators $\mathcal{A}$ and processes $\mathcal{P}$ are wired to the controller $\mathcal{C}$
    while the sensors $\mathcal{S}$ communicate over a wireless network, which in turn is coordinated by the \emph{Network Manager}.}\label{fig:NCS}
\end{figure}

The main contribution of the paper is to show that a self-triggering controller
can be derived using a receding horizon control formulation where the predicted
cost is used to jointly determine what control signal to be applied as well as
the time of the next sampling instant. Using this formulation we can guarantee
a minimum and a maximum time between samples.

We will initially consider a single-loop system. We will then extend the
approach to the multiple-loop case which can be analyzed with additional
constraints on the communication pattern. The results presented herein are
extensions to the authors' previous work presented in \cite{hen+12}. These
results have been extended to handle multiple control loops on the same network
while maintaining control performance and simultaneously guarantee conflict
free transmissions on the network.

The development of control strategies for wireless automation has become a
large area of research in which, up until recently, most efforts have been made
under the assumption of periodic communication \cite{ant+bai07}. However the
idea of adaptive sampling is receiving increased attention. The efforts within
this area may coarsely be divided into the two paradigms of event- and
self-triggered control. In event-triggered control, \eg, \cite{ast+ber99},
\cite{arz99}, \cite{tab07}, \cite{hee+08}, \cite{bemporad:2006:event-based-mpc}, \cite{varutti:2009:event-based-mpc-ncs}, \cite{gross:2013:distributed-mpc-event-based}, \cite{molin:2014:bi-level-event-trigger-shared}, \cite{blind:2013:time-trigger-and-event-based-integrator} the sensor continuously monitors the
process state and generates a sample when the state violates some predefined
condition. Self-triggered control, \eg, \cite{vel+03}, \cite{wan+lem09},
\cite{ant+tab10}, \cite{maz+10}, utilizes a model of the system to predict when
a new sample needs to be taken in order to fulfill some pre-defined condition.
A possible advantage of event- over self-triggered control is that the
continuous monitoring of the state guarantees that a sample will be drawn as
soon as the design condition is violated, thus resulting in an appropriate
control action. The self-triggered controller will instead operate in open-loop
between samples. This could potentially be a problem as disturbances to the
process between samples cannot be attenuated. This problem may however be
avoided by good choices of the inter-sampling times. The possible advantage of
self- over event-triggered control is that the transmission time of sensor
packets is known a-priori and hence we may schedule them, enabling sensors and
transmitters to be put to sleep in-between samples and thereby save energy.

The research area of joint design of control and communication is currently
very active, especially in the context of event-triggered control. In
\cite{mol+hir09} a joint optimization of control and communication is solved
using dynamic programming, placing a communication scheduler in the sensor. In
\cite{ant+hee12:ACC}, \cite{ant+hee12:CDC} and \cite{hee+don:TAC13} the control
law and event-condition are co-designed to match performance of periodic
control using a lower communication rate. In \cite{hee+don:AUT13} this idea is
extended to decentralized systems. 
The use of predictive control is also
gaining popularity within the networked control community \cite{cas+06},
\cite{tan+sil06}, \cite{liu+06}, \cite{que+nes12}. In \cite{que+03} predictive
methods and vector quantization are used to reduce the controller to actuator
communication in multiple input systems. In \cite{lje+13} model predictive
control (MPC) is used to design multiple actuator link scheduling and control
signals. There have also been developments in using MPC under event-based
sampling. In \cite{ber+bem11} a method for trading control performance and
transmission rate in systems with multiple sensors is given. In \cite{leh+12}
an event-based MPC is proposed where the decision to re-calculate the control
law is based on the difference between predicted and measured states.

The problem addressed in the present paper, namely the joint design of a
self-triggering rule and the appropriate control signal using MPC has been less
studied than its event-triggered counterpart. In \cite{bar+12} an approach
relying on an exhaustive search which utilizes sub-optimal solutions giving the
control policy and a corresponding self-triggering policy is presented. In
\cite{eqt+13} it is suggested that a portion of the open loop trajectory
produced by the MPC should be applied to the process. The time between
re-optimizations is then decided via a self-triggering approach.

The approach taken in this paper differs from the above two in that the open
loop cost we propose the MPC to solve is designed to be used in an adaptive
sampling context. Further our extension to handle multiple loops using a
self-triggered MPC is new. So is the guarantee of conflict free transmissions.

The outline of the paper is as follows. In
Section~\ref{sec:problem_formulation} the self-triggered network scheduling and
control problem is defined and formulated as a receding horizon control
problem. Section~\ref{sec:open_loop_problem} presents the open-loop optimal
control problem for a single loop, to be solved by the receding horizon
controller. \edit{The optimal solution is presented in Section~\ref{sec:costFCNmin}}. Section~\ref{sec:rec_hor_imp}
presents the receding horizon control algorithm for a single loop in further
detail and gives conditions for when it is stabilizing. The results are then
extended to the multiple loop case in Section~\ref{sec:multiloop_extension}
where conditions for stability and conflict free transmissions are given. The
proposed method is explained and evaluated on simulated examples in
Section~\ref{sec:numerical_results}. Concluding discussions are made in
Section~\ref{sec:conclusions}.

\section{Self-Triggered Networked Control Architecture}\label{sec:problem_formulation}
We consider the problem of controlling $s\geq1$ processes $\mathcal{P}_1$
through $\mathcal{P}_s$ over a shared communication network as in
Fig.~\ref{fig:NCS}. The processes are controlled by the controller
$\mathcal{C}$ which computes the appropriate control action and schedule for
each process. Each process $\mathcal{P}_\ell$ is given by a linear
time-invariant (LTI) system
\begin{equation}\label{eq:multiprocess}
\begin{aligned}
    &x_\ell(k+1)=A_\ell x_\ell(k)+B_\ell u_\ell(k),\\
    &x_\ell(k)\in\mathbb{R}^{n_\ell},\,u_\ell(k)\in\mathbb{R}^{m_\ell}.
\end{aligned}
\end{equation}
The controller works in the following way: 
\edit{At time $k=k_\ell$, sensor $\mathcal{S}_\ell$ transmits a sample $x_\ell(k_\ell)$ to the controller, which then computes the control signal $u_\ell(k_\ell)$ and sends it to the actuator $\mathcal{A}_\ell$.}
\edit{Here $\ell \in \left\{1,2,\dots,s\right\}$ is the process index.}
The actuator in turn will apply this control
signal to the process until
\edit{ a new value is received from the controller.}
Jointly with deciding $u_\ell(k_\ell)$ the controller also decides how many
discrete time steps, say \edit{$I_\ell(k_\ell) \in \mathbb{N^+} \triangleq \{1,2,\dots \}$}, it will wait before it needs to
change the control signal the next time. This value $I_\ell(k_\ell)$ is sent to
the \emph{Network Manager} which will schedule the sensor $\mathcal{S}_\ell$ to
send a new sample at time $k=k_\ell+I_\ell(k_\ell)$.
\edit{To guarantee conflict free
transmissions on the network, only one sensor is allowed to transmit at every time instance.} Hence, when deciding the time to wait $I_\ell(k_\ell)$, the
controller must make sure that no other sensor $\mathcal{S}_q$, $q\neq\ell$,
already is scheduled for transmission at time $k=k_\ell+I_\ell(k_\ell)$. 

We propose that the controller $\mathcal{C}$ should be implemented as a
receding horizon controller which for an individual loop~$\ell$ at every
sampling instant $k=k_\ell$ solves an open-loop optimal control problem. It
does so by minimizing the infinite-horizon quadratic cost function
\begin{equation*}
\sum_{l=0}^{\infty}\bigg(\|x_\ell(k_\ell+l)\|^2_{Q_\ell} + \|u_\ell(k_\ell+l)\|^2_{R_\ell} \bigg)
\end{equation*}
\edit{subject to the user defined weights ${Q_\ell}$ and ${R_\ell}$, while taking
system dynamics into account. In Section~\ref{sec:open_loop_problem} we will embellish this cost function, such that control performance, inter sampling time and overall network
schedulability also are taken into consideration.}

\begin{remark}
\edit{
The focus is on a networked system where the network manager and controller are integrated in the same unit. This means, that the controller can send the schedules, that contain the transmission times of the sensors, directly to the network manager. This information needs to be transmitted to the sensor node such that it knows when to sample and transmit. For example, using the IEEE 802.15.4 superframe, described in \cite{ieee:802.15.4-2011:standard}, this can be done without creating additional overhead. Here the network manager broadcasts a beacon in the beginning of each superframe, which occurs at a fixed time interval. This beacon is received by all devices on the network and includes a schedule of the sensors that are allowed to transmit at given times.}
\end{remark}

\section{Adaptive Sampling Strategy}\label{sec:open_loop_problem}
For pedagogical ease we will in this section study the case when we control a
single process on the network allowing us to drop the loop index $\ell$. The
process we control has dynamics
\begin{equation}\label{eq:single_process}
\begin{aligned}
    x(k+1)=Ax(k)+Bu(k),\, x(k)\in\mathbb{R}^{n},\,u(k)\in\mathbb{R}^{m}
\end{aligned}
\end{equation}
and the open-loop cost function we propose the controller to minimize at every
sampling instant is
\begin{equation}\label{eq:J}
    J(x(k),i,\mathcal{U}) =  \frac{\alpha}{i} + \sum_{l=0}^{\infty}\bigg(\|x(k+l)\|^2_Q + \|u(k+l)\|^2_R \bigg)
\end{equation}
where $\alpha \in \mathbb{R^+}$ \editt{is a design variable that is used to trade off the cost of sampling against the cost of control and $i = I(k)\in \left\{1,2,\dots,p\right\}$ is the number of discrete time units to wait before taking the next sample, where $\rate \in \mathbb{N^+}$ is the maximum number of time units to wait until taking the next sample. Further, the design variables} $0<Q$ and $0<R$ are symmetric matrices of appropriate dimensions. We optimize this
cost over the constraint that the control sequence $\mathcal{U}\triangleq \{u(k),
u(k+1),\ldots\}$ should follow the specific shape illustrated in
Fig.~\ref{fig:predictionhorizon}, for some fixed period $\rate$. \edit{The period $\rate$ is the maximum amount of time steps the sensor is allowed to wait before taking the next sample. If $p = 1$, a sample is taken at every time instance and a regular receding horizon cost is obtained. If one on the other hand would select a large value of $p$, very long sampling intervals can be obtained. This can though affect the control performance significantly in case there are disturbances in the system that make the size of the state increase.}
\begin{figure}
    \centering
    \psfrag{0}[l][][\fragsize]{$x(k)$}
    \psfrag{1}[l][][\fragsize]{$x(k+i)$}
    \psfrag{2}[][][\fragsize]{$x(k+i+\rate)$}
    \psfrag{3}[][][\fragsize]{$x(k+i+2\cdot \rate)$}
    \psfrag{a}[][][\fragsize]{$u(k)$}
    \psfrag{b}[][][\fragsize]{$u(k+i)$}
    \psfrag{c}[][][\fragsize]{$u(k+i+\rate)$}
    \psfrag{d}[][][\fragsize]{$u(k+i+2\cdot \rate)$}
    \psfrag{l}[][][\fragsize]{$\ldots$}
    \includegraphics[width=0.9\columnwidth]{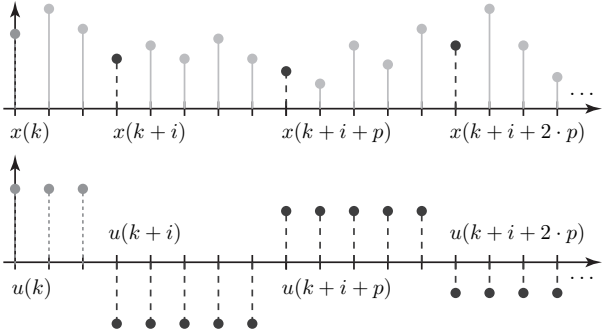}\\
    \caption{The prediction horizon. Here typical signal predictions are shown with $I(k)=3$ and $\rate=5$.}\label{fig:predictionhorizon}
\end{figure}
Thus, the number of discrete time units \editt{ $i=I(k) \in \left\{1,2,\dots,p\right\}$} to wait before taking the
next sample $x(k+i)$, as well as the levels in the control sequence
$\mathcal{U}$ are free variables over which we optimize. Note that neither the
state nor the control values have a constrained magnitude.

\edit{The constraint on the shape of the control trajectory $\mathcal{U}$ is motivated by the idea, that this sequence is applied to the actuator from time $k$ until time instant $k+i$. At time $x(k+i)$ we take a new sample and redo the optimization. By this method we get a joint optimization of the control signal to be applied as well as the number of time steps to the next sampling instant.} 
%
The reason for letting the system be controlled by a control
signal with period $\rate$ after this is that we hope for the receding horizon
algorithm to converge to this \edit{sampling-}rate. \edit{We will provide methods for choosing $\rate$ so that this happens in Section~\ref{sec:rec_hor_imp}.} The reason for
wanting convergence to a down sampled control is that we want the system to be
sampled at a slow rate when it has reached steady state, while we
want it to be sampled faster during the transients.


\edit{If one includes the above mentioned constraints, then (\ref{eq:J}) can be rewritten as}
\begin{equation}\label{eq:J_w_constraints}
\begin{aligned}
J(x(k),i,\mathcal{U}(i)) =&\frac{\alpha}{i}+ \sum_{l=0}^{i-1}\bigg(\|x(k+l)\|^2_Q + \|u(k)\|^2_R \bigg)\\
   +& \sum_{r=0}^\infty\Bigg(\sum_{l=0}^{\rate-1}\bigg(\|x(k+i+r\cdot \rate+l)\|^2_Q \\
   &\qquad+ \|u(k+i+r\cdot \rate)\|^2_R \bigg)\Bigg)\\
   \end{aligned}
\end{equation}
where \edit{$i$ and} $\mathcal{U}(i) \triangleq \{u(k), u(k+i), u(k+i+\eta \cdot \rate), \ldots\}$,
$\eta\in \mathbb{N}^+$, are the decision variables over which we optimize. The
term ${\alpha}/{i}$ reflects the cost of sampling. We use this cost to weight
the cost of sampling against the classical quadratic control performance cost.
For a given $x(k)$, choosing a large $\alpha$ will force $i$ to be larger and
hence give longer inter sampling times. 
\edit{By the construction of the cost, we may
tune $Q$, $R$ and $\alpha$ via simulations to get the desired sampling behaviour while maintaining the desired control performance}. One could imagine a more general
cost of sampling. \edit{Here however we found ${\alpha}/{i}$ sufficient to be able to trade control performance for communication cost, see simulations in Section~\ref{sec:compsingle}.} 

\section{Cost Function Minimization}\label{sec:costFCNmin} Having defined the open-loop cost
(\ref{eq:J_w_constraints}) we proceed by computing its optimal value. We start
by noticing that, even though we have a joint optimization problem, we may
state it as
\begin{equation}\label{eq:min_J_u}
\underset{i}{\text{minimize}} \Big( \underset{\mathcal{U}(i)}{\text{minimize}} \; J(x(k),i,\mathcal{U}(i)) \Big).
\end{equation}
We will use this separation and start by solving the inner problem, that of
minimizing $J(x(k),i,\mathcal{U}(i))$ for a given value of $i$. In order to
derive the solution, and for future reference, we need to define some
variables.
\begin{definition}\label{def:lifting}
We define notation for the lifted model as
\begin{equation*}
     A^\bi = A^{i}, \quad  B^\bi =\sum_{q=0}^{i-1}A^qB.
\end{equation*}
and notation for the generalized weighting matrices associated to
(\ref{eq:J_w_constraints}) as
\begin{equation*}
\begin{aligned}
     Q^\bi &=  Q^\bim +  {A^\bim}^T Q  A^\bim \\
     R^\bi &=  R^\bim +  {B^\bim}^T Q  B^\bim + R\\
     N^\bi &=  N^\bim +  {A^\bim}^T Q  B^\bim\\
\end{aligned}
\end{equation*}
where \edit{$i\in \left\{1,2,\dots,p\right\}$, }$Q^\bie = Q$, $R^\bie = R$ and $N^\bie = 0$.
\end{definition}

Using Definition~\ref{def:lifting} it is straightforward to show the following
lemma.
\begin{lemma}\label{lem:sum_collapse}
It holds that
\begin{equation*}
\begin{aligned}
&\sum_{l=0}^{i-1}\bigg(\|x(k+l)\|^2_Q + \|u(k)\|^2_R \bigg)\\
&= x(k)^TQ^\bi x(k)+ u(k)^TR^\bi u(k)+ 2x(k)^TN^\bi u(k)\\
    \end{aligned}
\end{equation*}
and $x(k+i) =  A^\bi  x(k)  +  B^\bi  u(k)$.
\end{lemma}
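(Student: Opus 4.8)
The plan is to prove both claims simultaneously by induction on $i$, exploiting the fact that the shape constraint on $\mathcal{U}$ freezes the control at $u(k)$ throughout the interval, so that $u(k+l)=u(k)$ for $l=0,\dots,i-1$. The recursive definitions of $Q^\bi$, $R^\bi$ and $N^\bi$ in Definition~\ref{def:lifting} are tailored precisely so that each matrix accumulates the quadratic-form coefficient contributed by one additional stage; an inductive argument should therefore align term-by-term with the recursion.

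First I would dispose of the state-propagation claim. Since the input is held constant, a direct unrolling of (\ref{eq:single_process}) gives $x(k+l)=A^l x(k)+\big(\sum_{q=0}^{l-1}A^qB\big)u(k)$, which at $l=i$ is exactly $x(k+i)=A^\bi x(k)+B^\bi u(k)$. Equivalently, this follows inductively from $x(k+i)=A\,x(k+i-1)+Bu(k)=A\big(A^\bim x(k)+B^\bim u(k)\big)+Bu(k)$ together with the identity $AB^\bim+B=B^\bi$, which is immediate from the definition of $B^\bi$.

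For the sum identity the base case $i=1$ reduces to $x(k)^TQx(k)+u(k)^TRu(k)$ and matches the right-hand side because $Q^\bie=Q$, $R^\bie=R$ and $N^\bie=0$. For the inductive step I would peel off the last stage, writing the $i$-stage sum as the $(i-1)$-stage sum plus $\|x(k+i-1)\|^2_Q+\|u(k)\|^2_R$. Substituting the already-established expression $x(k+i-1)=A^\bim x(k)+B^\bim u(k)$ and expanding the quadratic form produces three groups of terms: a pure-$x$ term with coefficient ${A^\bim}^T Q A^\bim$, a pure-$u$ term with coefficient ${B^\bim}^T Q B^\bim+R$, and a cross term with coefficient ${A^\bim}^T Q B^\bim$. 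Adding these to the induction hypothesis and comparing with the recursions $Q^\bi=Q^\bim+{A^\bim}^T Q A^\bim$, $R^\bi=R^\bim+{B^\bim}^T Q B^\bim+R$ and $N^\bi=N^\bim+{A^\bim}^T Q B^\bim$ closes the induction.

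The computation itself is entirely routine and I do not anticipate a substantive obstacle. The only point requiring care is the index bookkeeping: the last-stage state $x(k+i-1)$ must be written using the $(i-1)$-lifted matrices $A^\bim$ and $B^\bim$ rather than $A^\bi$, $B^\bi$, since it is precisely this shift of index that makes the expanded stage coefficients coincide with the one-step recursions defining the generalized weighting matrices.
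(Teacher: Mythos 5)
Your proof is correct: the induction on $i$, with the state-propagation identity $AB^\bim+B=B^\bi$ and the term-by-term matching of the expanded quadratic form against the recursions for $Q^\bi$, $R^\bi$, $N^\bi$, is exactly the argument the paper has in mind when it states that the lemma is ``straightforward to show'' from Definition~\ref{def:lifting} (no explicit proof is given in the paper). Nothing is missing.
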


\begin{lemma}\label{lem:P_conversion}
Assume that $0 < Q$, $0<R$ and that the pair $(A^\bir,B^\bir)$ is controllable.
Then
\begin{multline*}
\min_{\mathcal{U}(i)}\sum_{r=0}^\infty\Bigg(\sum_{l=0}^{\rate-1}\bigg(\|x(k+i+r\cdot
\rate+l)\|^2_Q \\
+ \|u(k+i+r\cdot \rate)\|^2_R \bigg)\Bigg) = \|x(k+i)\|^2_{P^\bir}
\end{multline*}
and the minimizing control signal characterizing $\mathcal{U}(i))$ is given by
\begin{equation*}
u(k+i+r\cdot \rate) = -L^\bir x(k+i+r\cdot \rate).
\end{equation*}
In the above,
\begin{equation}\label{eq:ricc_periodic}
\begin{aligned}
P^\bir &=  Q^\bir +  {A^\bir}^TP^\bir A^\bir\\
&\qquad- \big( {A^\bir}^TP^\bir B^\bir+ N^\bir\big)L^\bir\\
L^\bir &=\big( R^\bir+{B^\bir}^TP^\bir B^\bir\big)^{-1}\\
&\qquad\times\big( {A^\bir}^TP^\bir B^\bir+ N^\bir\big)^T.\\
\end{aligned}
\end{equation}
\end{lemma}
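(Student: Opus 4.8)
The plan is to recognize the tail cost as a standard infinite-horizon LQR problem for the down-sampled (lifted) system, once the control is constrained to be held constant on each block of length $\rate$. First I would introduce the lifted state $z(r) \triangleq x(k+i+r\cdot \rate)$ and lifted input $v(r) \triangleq u(k+i+r\cdot \rate)$. Because the control is constant over each window of $\rate$ steps, Lemma~\ref{lem:sum_collapse}, applied with $i$ replaced by $\rate$ and initial time $k+i+r\cdot\rate$, supplies simultaneously the block dynamics $z(r+1) = A^\bir z(r) + B^\bir v(r)$ and the collapse of the inner sum, so that the cost accrued over the $r$-th window equals $z(r)^T Q^\bir z(r) + v(r)^T R^\bir v(r) + 2 z(r)^T N^\bir v(r)$.

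Substituting these block quantities, the left-hand side becomes the time-invariant, infinite-horizon quadratic cost
\[
\sum_{r=0}^\infty \Big( z(r)^T Q^\bir z(r) + v(r)^T R^\bir v(r) + 2 z(r)^T N^\bir v(r) \Big)
\]
subject to $z(r+1) = A^\bir z(r) + B^\bir v(r)$ with $z(0) = x(k+i)$. This is exactly a discrete-time LQR problem with a cross-coupling term $N^\bir$ between state and input. I would then verify its hypotheses: unrolling the recursions in Definition~\ref{def:lifting} shows $Q^\bir \succeq Q \succ 0$ and $R^\bir \succeq R \succ 0$, while the fact that the window cost is a sum of the nonnegative terms $\|x\|^2_Q + \|u\|^2_R$ guarantees that the combined weighting matrix $\left[\begin{smallmatrix} Q^\bir & N^\bir \\ {N^\bir}^T & R^\bir\end{smallmatrix}\right]$ is positive semidefinite. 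Together with the assumed controllability of $(A^\bir,B^\bir)$, this is precisely the setting in which the cross-coupled discrete algebraic Riccati equation admits a unique stabilizing positive-definite solution.

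The conclusion then follows from standard LQR theory: the optimal value is the quadratic form $z(0)^T P^\bir z(0) = \|x(k+i)\|^2_{P^\bir}$ in the initial lifted state, and the optimizer is the stationary state feedback $v(r) = -L^\bir z(r)$, i.e. $u(k+i+r\cdot\rate) = -L^\bir x(k+i+r\cdot\rate)$, with $P^\bir$ and $L^\bir$ given by (\ref{eq:ricc_periodic}). Stationarity of the gain is inherited from the time-invariant, infinite-horizon structure of the lifted problem.

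The main obstacle I anticipate is handling the cross term $N^\bir$ cleanly, since the textbook Riccati result is usually stated without state-input coupling. I would remove it by the change of variable $v(r) \mapsto v(r) + (R^\bir)^{-1}{N^\bir}^T z(r)$, yielding an equivalent standard LQR with modified dynamics $A^\bir - B^\bir (R^\bir)^{-1}{N^\bir}^T$ and state weight $Q^\bir - N^\bir (R^\bir)^{-1}{N^\bir}^T$, the latter being positive semidefinite as a Schur complement of the combined weighting matrix. Some care is needed to confirm that the stabilizability and detectability conditions survive this transformation, but here the strict positivity $Q^\bir \succ 0$ makes the detectability requirement automatic, so the reduction goes through and the Riccati form (\ref{eq:ricc_periodic}) is recovered.
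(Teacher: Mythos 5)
Your proposal is correct and follows essentially the same route as the paper: collapse each length-$\rate$ window via Lemma~\ref{lem:sum_collapse} into a lifted LQR problem with cross term $N^\bir$, then invoke standard infinite-horizon LQR theory under $0<Q^\bir$, $0<R^\bir$ and controllability of $(A^\bir,B^\bir)$. The only difference is that you spell out the completion-of-squares reduction that eliminates the cross term, where the paper simply cites the textbook result; this is additional detail, not a different argument.
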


\begin{proof}
\edit{The proof is given in Appendix~\ref{app:P_conversion}.}
\end{proof}

Using the above results we may formulate the main result of this section as
follows.
\begin{theorem}[Closed Form Solution]\label{thm:min_J}
Assume that $0 < Q$, $0<R$ and that the pair $(A^\bir,B^\bir)$ is controllable.
Then
\begin{equation}\label{eq:opt_open_loop_cost}
    \min_{\mathcal{U}(i)} J(x(k),i,\mathcal{U}(i)) =\frac{\alpha}{i}+ \|x(k)\|^2_{P^\bi}
\end{equation}
where
\begin{equation}\label{eq:ricc_i_step}
\begin{aligned}
P^\bi &=  Q^\bi +  {A^\bi}^TP^\bir A^\bi\\
&- \big( {A^\bi}^TP^\bir B^\bi+N^\bi\big)L^\bi\\
L^\bi &=\big(R^\bi+ {B^\bi}^TP^\bir B^\bi\big)^{-1}\\
&\times\big( {A^\bi}^TP^\bir B^\bi+ N^\bi\big)^T\\
\end{aligned}
\end{equation}
and $P^\bir$ is given by Lemma~\ref{lem:P_conversion}. Denoting the vector of
all ones in $\mathbb{R}^n$ as $\onevec_n$, the minimizing control signal
sequence is given by
\begin{equation*}
    \mathcal{U}^*=\{-L^\bi x(k)\onevec_i^T,-L^\bir x(k+i+r\cdot p)\onevec_p^T,\ldots\},\; r\in\mathbb{N}
\end{equation*}
where also $L^\bir$ is given by Lemma~\ref{lem:P_conversion}.
\end{theorem}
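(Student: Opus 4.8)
The plan is to exploit the nested structure already isolated in (\ref{eq:min_J_u}): fix $i$ and carry out the inner minimization over $\mathcal{U}(i)$ in two stages, each dispatched by one of the preceding lemmas. Since the sampling penalty $\alpha/i$ does not depend on $\mathcal{U}(i)$, it factors out of the minimization immediately. For the first segment of the horizon, where the single control level $u(k)$ is held for $i$ steps, Lemma~\ref{lem:sum_collapse} collapses the finite sum in (\ref{eq:J_w_constraints}) into the quadratic form $x(k)^TQ^\bi x(k)+u(k)^TR^\bi u(k)+2x(k)^TN^\bi u(k)$ and supplies the propagated state $x(k+i)=A^\bi x(k)+B^\bi u(k)$. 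For the infinite periodic tail, Lemma~\ref{lem:P_conversion} performs the minimization over all the remaining control levels $\{u(k+i),u(k+i+r\rate),\ldots\}$, returning the value $\|x(k+i)\|^2_{P^\bir}$ together with the periodic feedback law.

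After both lemmas are applied, the only remaining free variable is the first control level $u(k)$, since Lemma~\ref{lem:P_conversion} has already optimized every subsequent level as feedback on the states reached from $x(k+i)$. I would therefore substitute $x(k+i)=A^\bi x(k)+B^\bi u(k)$ into $\|x(k+i)\|^2_{P^\bir}$ and collect all $u(k)$-dependent terms, so that the inner problem reduces to the unconstrained quadratic minimization over $u(k)$ of
\begin{equation*}
\begin{aligned}
&u^T\big(R^\bi + {B^\bi}^T P^\bir B^\bi\big)u + 2x^T\big({A^\bi}^T P^\bir B^\bi + N^\bi\big)u\\
&\qquad + x^T\big(Q^\bi + {A^\bi}^T P^\bir A^\bi\big)x,
\end{aligned}
\end{equation*}
with $x=x(k)$ acting as a fixed parameter.

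The minimization is then a routine completion of the square. The Hessian $R^\bi + {B^\bi}^T P^\bir B^\bi$ is positive definite — $R^\bi>0$ follows inductively from $R>0$ and $Q>0$ in Definition~\ref{def:lifting}, while $P^\bir\geq 0$ comes from Lemma~\ref{lem:P_conversion} under the stated controllability assumption — so the stationarity condition yields the unique global minimizer $u^*(k)=-L^\bi x(k)$ with $L^\bi$ exactly as in (\ref{eq:ricc_i_step}), and back-substitution gives the optimal value $x(k)^TP^\bi x(k)$ with $P^\bi$ the Riccati-type expression in (\ref{eq:ricc_i_step}). Adding back $\alpha/i$ reproduces (\ref{eq:opt_open_loop_cost}). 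The full optimal sequence $\mathcal{U}^*$ is then assembled by holding this first level $-L^\bi x(k)$ for $i$ steps and appending the periodic feedback levels $-L^\bir x(k+i+r\rate)$ supplied by Lemma~\ref{lem:P_conversion}. I expect no deep obstacle here; the only points requiring genuine care are verifying positive definiteness of the Hessian so that the stationary point is a true global minimum, and the matrix bookkeeping of the completion of the square so that the collected coefficients match (\ref{eq:ricc_i_step}) term by term.
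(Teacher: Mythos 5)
Your proposal is correct and follows essentially the same route as the paper: both decompose the cost via Lemma~\ref{lem:P_conversion} for the periodic tail and Lemma~\ref{lem:sum_collapse} for the initial segment, reducing the problem to a single-step quadratic minimization over $u(k)$. The only difference is that you carry out the completion of the square (and the positive-definiteness check of $R^\bi+{B^\bi}^TP^\bir B^\bi$) explicitly, whereas the paper dispatches this final step by citing it as a standard one-step finite-horizon LQR problem.
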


\begin{proof}
\edit{The proof is given in Appendix~\ref{app:min_J}.}
\end{proof}

Now, getting back to the original problem (\ref{eq:min_J_u}). Provided that the
assumptions of Theorem~\ref{thm:min_J} hold, we may apply it giving that
\begin{equation*}
\underset{i,\mathcal{U}(i)}{\text{min}} \; J(x(k),i,\mathcal{U}(i)) = \underset{i}{\text{min}} \; \left\{\frac{\alpha}{i}+ \|x(k)\|^2_{P^\bi}\right\}.
\end{equation*}
Unfortunately the authors are not aware of any method to solve this problem in
general. If however $i$ is restricted to a known \emph{finite} set
$\nomI\subset \mathbb{N}^+$ we may find the optimal value within this set for a
given value of $x(k)$ by simply evaluating $\frac{\alpha}{i}+
\|x(k)\|^2_{P^\bi}$ $\forall i \in \nomI$ and by this obtaining the $i$ which
gives the lowest value of the cost. This procedure gives the optimum of
(\ref{eq:min_J_u}). Note that the computational complexity of finding the
optimum is not necessarily high, as we may compute ${P^\bi}$ $\forall i \in
\nomI$ off-line prior to execution.

\section{Single Loop Self-Triggered MPC}\label{sec:rec_hor_imp} 
\edit{Having presented the receding horizon cost, we now continue with formulating its implementation in further detail.} We will assume that $0<Q$, $0<R$ and that the down-sampled pair
$(A^\bir,B^\bir)$ is controllable. Let us also assume that the \emph{finite}
and \emph{non-empty} set $\nomI\subset\mathbb{N}^+$ is given and let 
\begin{equation}\label{eq:maxI}
\maxI = \max\nomI.
\end{equation}
From this we may use Theorem~\ref{thm:min_J} to compute the pairs
$(P^\bi,L^\bi)$ $\forall i \in \nomI$ and formulate our proposed receding
horizon control algorithm for a single loop:
\begin{algorithm}\label{alg:singleloop}
Single Loop Self-Triggered MPC \hspace*{\fill}
\begin{enumerate}
  \item At time $k=k'$ the sample $x(k')$ is transmitted by the sensor to
      the controller.
  \item Using $x(k')$ the controller computes
\begin{equation*}
\begin{aligned}
I(k') &= \arg\min_{i\in\nomI}\frac{\alpha}{i}+ \|x(k')\|^2_{P^\bi}\\
u(k') &=-L^{(I(k'))}x(k')
\end{aligned}
\end{equation*}
\item {}
\begin{enumerate}
\item The controller sends $u(k')$ to the actuator which applies
    $u(k)=u(k')$ to (\ref{eq:single_process}) until $k=k'+I(k')$.
\item The network manager is requested to schedule the sensor to transmit a
    new sample at time $k=k'+I(k')$.
\end{enumerate}
\end{enumerate}
\end{algorithm}

\begin{remark}
Note that the control signal value is sent to the actuator at the same time as
the controller requests the scheduling of the next sample by the sensor.
\end{remark}

\begin{remark}
The proposed algorithm guarantees a minimum and a maximum inter-sampling time.
The minimum time is $1$ time step in the time scale of the underlying process
(\ref{eq:single_process}) and the maximum inter sampling time is $\maxI$ time
steps. This implies that there is some minimum attention to every loop
independent of the predicted evolution of the process.
\end{remark}

\begin{remark}
    Even though we are working in uniformly sampled discrete time the state is not
    sampled at every time instant $k$. Instead the set of samples
    of the state actually taken is given by the set
    $\mathcal{D}$, which assuming that the first sample is taken at $k=0$, is
    given by
    \begin{equation}\label{eq:set_of_samples}
        \mathcal{D} = \left\{x(0),x(I(0)), x(I(I(0))),\ldots \right\}.
    \end{equation}
\end{remark}

\section{Analysis} Having established and detailed our receding horizon
control law for a single loop we continue with giving conditions for when it is
stabilizing. Letting $\lambda(A)$ denote the set of eigenvalues of $A$ 
we first recall the following
controllability conditions.
\begin{lemma}\label{lem:downsample}\edit{\cite{colaneri1991regulation}}
The system $(A^\bi,B^\bi)$ is controllable if and only if the pair $(A,B)$ is
controllable and $A$ has no eigenvalue $\lambda\in\lambda(A)$ such that
$\lambda \neq 1$ and $\lambda^i=1$.
\end{lemma}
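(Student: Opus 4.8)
The plan is to prove both directions with the Popov--Belevitch--Hautus (PBH) rank test: the pair $(A^\bi,B^\bi)$ is controllable if and only if there is no nonzero row vector $v^*$ and no $\mu\in\lambda(A^\bi)$ with $v^*A^\bi=\mu v^*$ and $v^*B^\bi=0$. The structural fact I would lean on is that $B^\bi=SB$ with $S\triangleq\sum_{q=0}^{i-1}A^q$, that $S$ is a polynomial in $A$ (so it commutes with $A$ and shares its eigenvectors), and that $(A-I)S=A^\bi-I$. For a left eigenvector $w^*A=\lambda w^*$ one then gets $w^*S=\sigma_\lambda w^*$ with the geometric sum $\sigma_\lambda\triangleq\sum_{q=0}^{i-1}\lambda^q$, which equals $(\lambda^i-1)/(\lambda-1)$ when $\lambda\neq1$ and equals $i$ when $\lambda=1$. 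Since $\lambda(A^\bi)=\{\lambda^i:\lambda\in\lambda(A)\}$, the test can be checked fibrewise over the eigenvalues of $A$.

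Necessity of the two stated conditions is the easy part. First, because $B^\bi=SB$ and $A^\bi$ are polynomials in $A$, the reachable subspace of $(A^\bi,B^\bi)$ is contained in that of $(A,B)$; hence if $(A,B)$ is not controllable neither is $(A^\bi,B^\bi)$. Equivalently, a left eigenvector $w^*$ of $A$ annihilating $B$ also satisfies $w^*B^\bi=0$ and $w^*A^\bi=\lambda^i w^*$, so it breaks PBH. Second, if some $\lambda\in\lambda(A)$ has $\lambda\neq1$ and $\lambda^i=1$, then $\sigma_\lambda=(\lambda^i-1)/(\lambda-1)=0$, so its left eigenvector gives $w^*B^\bi=\sigma_\lambda w^*B=0$ while $w^*A^\bi=w^*$; again PBH fails, now at $\mu=1$. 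Thus controllability forces both conditions.

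For sufficiency, assume $(A,B)$ controllable and that no eigenvalue is a nontrivial $i$-th root of unity, and suppose PBH fails: $v^*A^\bi=\mu v^*$, $v^*B^\bi=0$, $v^*\neq0$. Since $z^i-\mu$ has simple roots (the $i$-th roots $\zeta$ of $\mu$), the relation $v^*(A^\bi-\mu I)=0$ lets me split $v^*=\sum_{\zeta^i=\mu}w_\zeta^*$ into left eigenvectors of $A$ via the spectral idempotents $\prod_{\eta\neq\zeta}(A-\eta I)/(\zeta-\eta)$. Applying $S$ and then $B$ turns $v^*B^\bi=0$ into the single relation $\sum_{\zeta^i=\mu}\sigma_\zeta\,(w_\zeta^*B)=0$, in which each $\sigma_\zeta\neq0$ by the eigenvalue hypothesis and each $w_\zeta^*B\neq0$ by controllability of $(A,B)$, and I would then conclude that every $w_\zeta^*=0$, contradicting $v^*\neq0$.

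The step I expect to be delicate is exactly this last one when a fibre $\{\zeta:\zeta^i=\mu\}$ contains \emph{several} eigenvalues of $A$: PBH supplies only the single linear relation $\sum_\zeta\sigma_\zeta(w_\zeta^*B)=0$, which does not on its own force each summand to vanish, so the contradiction is not immediate and cancellation between eigenvectors sharing a fibre must be ruled out. Handling this higher-multiplicity case is precisely where the detailed spectral analysis of \cite{colaneri1991regulation} is required; in the generic situation where each $\mu$ has a single preimage in $\lambda(A)$, the relation reduces to $\sigma_\zeta(w_\zeta^*B)=0$ with $\sigma_\zeta\neq0$ and $w_\zeta^*B\neq0$, which yields the contradiction directly.
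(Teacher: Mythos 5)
The paper offers no proof of this lemma---it is imported verbatim from \cite{colaneri1991regulation}---so your argument has to stand on its own. Your necessity direction does: the identities $B^\bi=SB$ with $S=\sum_{q=0}^{i-1}A^q$, $w^*S=\sigma_\lambda w^*$ for a left eigenvector, and $\sigma_\lambda=(\lambda^i-1)/(\lambda-1)=0$ when $\lambda\neq 1$, $\lambda^i=1$, give clean PBH violations for $(A^\bi,B^\bi)$ under failure of either hypothesis. The problem is the sufficiency direction, and specifically the step you yourself flag as delicate. It is not merely delicate: the cancellation across a fibre $\{\zeta:\zeta^i=\mu\}$ containing several eigenvalues of $A$ can genuinely occur, and no spectral analysis can rule it out, because the implication is false there. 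Take $A=\mathrm{diag}(2,-2)$, $B=(1,1)^T$, $i=2$. Then $(A,B)$ is controllable and no eigenvalue is a root of unity, yet $A^{(2)}=4I$ and $B^{(2)}=(I+A)B=(3,-1)^T$, so $v^*=(1,3)$ satisfies $v^*A^{(2)}=4v^*$ and $v^*B^{(2)}=0$: the pair $(A^{(2)},B^{(2)})$ is uncontrollable. In your notation this is exactly $\sigma_2(w_2^*B)+\sigma_{-2}(w_{-2}^*B)=3\cdot 1+(-1)\cdot 1\neq$ forcing each term to zero---the single linear relation is satisfiable with all terms nonzero.

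The fix is an additional hypothesis of Kalman--Ho--Narendra type: besides excluding nontrivial $i$-th roots of unity from $\lambda(A)$ (which controls the factor $S$), one must also exclude pairs of distinct eigenvalues $\lambda_j\neq\lambda_k$ with $\lambda_j^i=\lambda_k^i$ (which controls the merging of eigenspaces of $A^i$); neither condition implies the other, as the scalar example $A=-1$, $i=2$ shows for the first and the example above shows for the second. Under both exclusions every fibre is a singleton, your relation collapses to $\sigma_\zeta(w_\zeta^*B)=0$ with $\sigma_\zeta\neq 0$ and $w_\zeta^*B\neq 0$, and your contradiction goes through (modulo the separate, minor case $\mu=0$, where $z^i-\mu$ has a repeated root and generalized left eigenvectors of $A$ at $0$ must be handled directly). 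So your proof strategy is sound, but it proves a corrected statement, not the lemma as printed; as written, the ``only if'' part of the equivalence you are asked to establish cannot be proved because it does not hold beyond the case where each $\mu\in\lambda(A^\bi)$ has a unique preimage in $\lambda(A)$ (e.g., scalar systems, which are the ones the paper actually simulates).
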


Using the above, and the results of Section \ref{sec:costFCNmin}, we may now give conditions for when
the proposed receding horizon control algorithm is stabilizing.
\begin{theorem}[Practical stability \cite{jiawan:2001:iss-dt}]\label{thm:stabilitybound}
Assume $0<Q$, $0<R$ and that $(A,B)$ is controllable. If we choose
$i\in\nomI\subset \mathbb{N}^+$ and $\rate=\optrate$ given by
\begin{equation}\label{eq:stab_rate_cond}
\optrate=\max \{i | i\in \nomI, \forall\lambda\in\lambda(A)\ \lambda^i \neq 1 \text{ if }\lambda \neq 1\}
\end{equation}
and apply Algorithm~\ref{alg:singleloop}, \edit{then the system state is ultimately bounded as per:}
\begin{equation*}
\frac{\alpha}{\maxI}\leq \lim_{k\rightarrow\infty}\min_{i\in\nomI}\bigg(\frac{\alpha}{i}+
\|x(k)\|^2_{P^\bi}\bigg) \leq
\frac{\alpha}{\epsilon}\bigg(\frac{1}{\optrate}-(1-\epsilon)\frac{1}{\maxI}\bigg).
\end{equation*}
\edit{In the above, $\maxI$ is as in (\ref{eq:maxI}) and $\epsilon$ is the largest value in the interval $(0,1]$ which ${\forall
i\in\nomI}$ fulfils}
\begin{equation*}
 \big(A^\bi - B^\bi L^\bi\big)^TP^\biro \big(A^\bi - B^\bi L^\bi\big) \leq (1-\epsilon)P^\bi
\end{equation*}
\edit{and is guaranteed to exist.}
\end{theorem}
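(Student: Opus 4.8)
The plan is to treat $V(x(k)) \triangleq \min_{i\in\nomI}\big(\alpha/i + \|x(k)\|^2_{P^\bi}\big)$, which by Theorem~\ref{thm:min_J} is exactly the optimal cost realized by Algorithm~\ref{alg:singleloop} at each sampling instant, as a Lyapunov-like function and to track its evolution along the sampling times in (\ref{eq:set_of_samples}). The lower bound is immediate: since every $P^\bi \succ 0$ and $i \leq \maxI$ for all $i \in \nomI$, we have $V(x(k)) \geq \min_{i\in\nomI}\alpha/i = \alpha/\maxI$ for all $k$, so the same bound survives in the limit. The substance is the upper bound, for which I would first establish the contraction inequality and then iterate it.

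First I would justify existence of $\epsilon$. Writing the Riccati recursion (\ref{eq:ricc_i_step}) in completed-square form gives, with $u(k) = -L^\bi x(k)$, the cost-to-go identity $\|x(k)\|^2_{P^\bi} = \sum_{l=0}^{i-1}\big(\|x(k+l)\|^2_Q + \|u(k)\|^2_R\big) + \|x(k+i)\|^2_{P^\biro}$, where $x(k+i) = (A^\bi - B^\bi L^\bi)x(k)$ by Lemma~\ref{lem:sum_collapse}. The running-cost term is bounded below by its $l=0$ contribution $\|x(k)\|^2_Q \geq \lambda_{\min}(Q)\|x(k)\|^2$, and since $P^\bi \succeq Q \succ 0$ this converts into a fixed fraction of $\|x(k)\|^2_{P^\bi}$, yielding $\|x(k+i)\|^2_{P^\biro} \leq (1-\epsilon_i)\|x(k)\|^2_{P^\bi}$ with $\epsilon_i = \lambda_{\min}(Q)/\lambda_{\max}(P^\bi) \in (0,1]$. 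This is precisely the stated matrix inequality; as $\nomI$ is finite, the largest $\epsilon$ valid for all $i$ is positive, which proves existence.

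Next I would derive the recursion between consecutive samples. With $i_n = I(k_n)$ the optimal index at sample $k_n$, the next sample obeys $x(k_{n+1}) = (A^{(i_n)} - B^{(i_n)}L^{(i_n)})x(k_n)$. Since $\optrate\in\nomI$ is admissible in the minimization, and since the $\rate$-step recursion (\ref{eq:ricc_i_step}) reduces to the fixed point (\ref{eq:ricc_periodic}) when $\rate=\optrate$ so that $P^{(\optrate)}=P^\biro$, I can overbound $V(x(k_{n+1})) \leq \alpha/\optrate + \|x(k_{n+1})\|^2_{P^\biro}$ by selecting $j=\optrate$ in the minimization. Applying the contraction to the last term and using $\|x(k_n)\|^2_{P^{(i_n)}} = V(x(k_n)) - \alpha/i_n$ together with $i_n \leq \maxI$ gives the affine recursion $V(x(k_{n+1})) \leq (1-\epsilon)V(x(k_n)) + \alpha/\optrate - (1-\epsilon)\alpha/\maxI$. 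Because $0\leq 1-\epsilon<1$, iterating and passing to the limit yields $\limsup_n V(x(k_n)) \leq \frac{\alpha}{\epsilon}\big(\frac{1}{\optrate} - (1-\epsilon)\frac{1}{\maxI}\big)$, the claimed upper bound.

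I expect the main obstacle to be the contraction/existence step: verifying the cost-to-go identity from the algebraic Riccati form and the identification $P^{(\optrate)} = P^\biro$, both of which hinge on the choice of $\optrate$ in (\ref{eq:stab_rate_cond}). Via Lemma~\ref{lem:downsample}, this choice is exactly what makes $(A^\biro,B^\biro)$ controllable, hence what guarantees $P^\biro$ is the unique stabilizing periodic solution and that the fixed-point reduction is legitimate. A secondary care point is relating the $\limsup$ along the sampling subsequence to the $\lim$ over all $k$ in the statement, which follows since $k_n \to \infty$ (as $i_n \geq 1$) and the lower bound $\alpha/\maxI$ holds pointwise.
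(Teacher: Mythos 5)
Your proof is correct and follows essentially the same route as the paper's: the optimal cost $V$ as a Lyapunov-like function, overbounding $V$ at the next sampling instant via the suboptimal choice $j=\optrate$ (legitimate because $P^{(\optrate)}=P^\biro$ is the fixed point of the periodic Riccati equation), and iterating the affine contraction $V_{k+i}\leq(1-\epsilon)V_k+\beta$ with the same $\beta=\alpha/\optrate-(1-\epsilon)\alpha/\maxI$ to obtain the limit $\beta/\epsilon$. The only (minor) difference is how existence of $\epsilon$ is certified: you give an explicit bound $\epsilon_i=\lambda_{\min}(Q)/\lambda_{\max}(P^\bi)$ from the cost-to-go identity, whereas the paper argues it through the running-cost quadratic form $\|\hat x(k|k)\|^2_{S^\bi}$; both are valid.
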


\begin{proof}
\edit{The proof is given in Appendix~\ref{app:stabilitybound}.}
\end{proof}

\begin{remark}
The bound given in Theorem~\ref{thm:stabilitybound} scales linearly with the
choice of $\alpha$.
\end{remark}

\begin{assumption}\label{ass:pstarIsGamma}
Assume that $\nexists\lambda\in\lambda(A)$ except possibly $\lambda=1$ such
that $|\lambda|=1$ and the complex argument
$\angle\lambda=\frac{2\pi}{\maxI}\cdot n$ for some $n\in\mathbb{N}^+$.
\end{assumption}
\begin{lemma}\label{lem:pstarIsGamma}
Let Assumption~\ref{ass:pstarIsGamma} hold, then $\optrate=\maxI$.
\end{lemma}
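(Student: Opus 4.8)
The plan is to show that under Assumption~\ref{ass:pstarIsGamma}, the maximizer in the definition~(\ref{eq:stab_rate_cond}) of $\optrate$ is exactly $\maxI = \max\nomI$. Recall that $\optrate = \max\{i \mid i\in\nomI,\ \forall\lambda\in\lambda(A)\ \lambda^i\neq 1 \text{ if }\lambda\neq 1\}$. Since $\maxI\in\nomI$ and $\maxI\geq i$ for all $i\in\nomI$ by~(\ref{eq:maxI}), it suffices to verify that $i=\maxI$ satisfies the defining condition, i.e.\ that no eigenvalue $\lambda\in\lambda(A)$ with $\lambda\neq 1$ satisfies $\lambda^\maxI = 1$. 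If $\maxI$ itself qualifies, then being the largest element of $\nomI$ it is automatically the maximum over the qualifying subset, so $\optrate = \maxI$.

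First I would argue by contraposition: suppose some $\lambda\in\lambda(A)$ with $\lambda\neq 1$ has $\lambda^\maxI = 1$. A complex number satisfying $\lambda^\maxI = 1$ is a $\maxI$-th root of unity, hence $|\lambda| = 1$ and $\angle\lambda = \frac{2\pi}{\maxI}\cdot n$ for some integer $n\in\{0,1,\dots,\maxI-1\}$. The case $n=0$ gives $\lambda=1$, which is excluded by hypothesis, so $n\in\{1,\dots,\maxI-1\}\subset\mathbb{N}^+$. But then $\lambda$ is an eigenvalue with $|\lambda|=1$, $\lambda\neq 1$, and argument an integer multiple of $\frac{2\pi}{\maxI}$, which is precisely the configuration forbidden by Assumption~\ref{ass:pstarIsGamma}. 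This contradiction establishes that no such $\lambda$ exists, so $i=\maxI$ meets the condition in~(\ref{eq:stab_rate_cond}).

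The only genuinely substantive point is recognizing that the condition ``$\lambda^\maxI = 1$ with $\lambda\neq 1$'' is logically equivalent to ``$\lambda$ is a nontrivial $\maxI$-th root of unity,'' and then translating that into the modulus-and-argument language of Assumption~\ref{ass:pstarIsGamma}. Everything else is bookkeeping: the algebra of roots of unity is elementary, and the maximality step follows immediately once membership in the qualifying set is confirmed. I expect the main (and really only) obstacle to be stating the root-of-unity characterization cleanly and matching the range of $n$ to $\mathbb{N}^+$ as written in the assumption; in particular one should note that an argument of the form $\frac{2\pi}{\maxI}n$ with $n\geq \maxI$ can be reduced modulo $\maxI$, so restricting to $n\in\{1,\dots,\maxI-1\}$ loses no generality. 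Since this is an ``if and only if'' flavored characterization used in one direction only, no converse is needed, and the proof should be short.
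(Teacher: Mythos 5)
Your proof is correct and follows essentially the same route as the paper: both arguments observe that $\optrate=\maxI$ reduces to checking that no eigenvalue $\lambda\neq 1$ satisfies $\lambda^{\maxI}=1$, and then note that such a $\lambda$ would have to have $|\lambda|=1$ and $\angle\lambda=\tfrac{2\pi}{\maxI}\cdot n$ for some $n\in\mathbb{N}^+$, contradicting Assumption~\ref{ass:pstarIsGamma}. Your version merely spells out the maximality bookkeeping and the range of $n$ a bit more explicitly than the paper does.
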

\begin{proof}
\edit{The proof is given in Appendix~\ref{app:pstarIsGamma}.}
\end{proof}

\begin{corollary}[Asymptotic stability of the origin]\label{corr:stability}
Assume $0<Q$, $0<R$ and that $(A,B)$ is controllable. Further assume that
either Assumption~\ref{ass:pstarIsGamma} holds or $\alpha=0$. If we choose
$i\in\nomI\subset \mathbb{N}^+$ and $\rate=\optrate$ given by
(\ref{eq:stab_rate_cond}) and apply Algorithm~\ref{alg:singleloop}, then
\begin{equation*}
\lim_{k\rightarrow\infty}x(k) = 0
\end{equation*}
\end{corollary}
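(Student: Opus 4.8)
The plan is to use Theorem~\ref{thm:stabilitybound} as the workhorse and then run a squeeze argument that collapses its two-sided bound to a single value, from which convergence of the state follows by positive definiteness of the cost matrices. I would split the analysis into the two cases permitted by the hypotheses, namely (i) Assumption~\ref{ass:pstarIsGamma} holds, and (ii) $\alpha=0$.

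For case (i), I would first invoke Lemma~\ref{lem:pstarIsGamma}, which under Assumption~\ref{ass:pstarIsGamma} gives $\optrate=\maxI$. Substituting this into the upper bound of Theorem~\ref{thm:stabilitybound} and simplifying, the factor $\frac{1}{\optrate}-(1-\epsilon)\frac{1}{\maxI}$ becomes $\frac{\epsilon}{\maxI}$, so the upper bound collapses to exactly $\frac{\alpha}{\maxI}$, which coincides with the lower bound. The squeeze therefore yields $\lim_{k\to\infty}\min_{i\in\nomI}\big(\frac{\alpha}{i}+\|x(k)\|^2_{P^\bi}\big)=\frac{\alpha}{\maxI}$. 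Letting $I(k)$ denote the minimizing index from Algorithm~\ref{alg:singleloop}, and using $\maxI=\max\nomI$ so that $\frac{\alpha}{I(k)}\ge\frac{\alpha}{\maxI}$, I would then write $\|x(k)\|^2_{P^{(I(k))}}=\min_{i\in\nomI}\big(\tfrac{\alpha}{i}+\|x(k)\|^2_{P^\bi}\big)-\frac{\alpha}{I(k)}\le\min_{i\in\nomI}\big(\tfrac{\alpha}{i}+\|x(k)\|^2_{P^\bi}\big)-\frac{\alpha}{\maxI}\to 0$; since the quadratic term is nonnegative, it must converge to $0$.

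For case (ii), $\alpha=0$ makes both bounds of Theorem~\ref{thm:stabilitybound} equal to $0$ regardless of the relation between $\optrate$ and $\maxI$, so immediately $\lim_{k\to\infty}\min_{i\in\nomI}\|x(k)\|^2_{P^\bi}=0$. In either case I would then use that $0<Q$ and $0<R$ force each $P^\bi$ to be positive definite, together with the finiteness and nonemptiness of $\nomI$, to obtain a uniform constant $c>0$ with $c\|x(k)\|^2\le\|x(k)\|^2_{P^\bi}$ for every $i\in\nomI$. Applying this bound at the minimizing index gives $c\|x(k)\|^2\to 0$, i.e. $x(k)\to 0$.

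The main obstacle I anticipate is the bookkeeping in case (i): one must verify that convergence of the \emph{sum} $\frac{\alpha}{I(k)}+\|x(k)\|^2_{P^{(I(k))}}$ to $\frac{\alpha}{\maxI}$ genuinely forces the quadratic part to vanish, rather than being masked by the sampling term drifting below $\frac{\alpha}{\maxI}$ — which is precisely why the inequality $\frac{\alpha}{I(k)}\ge\frac{\alpha}{\maxI}$, a consequence of $\maxI=\max\nomI$, is essential. The remaining point worth checking, though routine, is the positive definiteness of the $P^\bi$ and the resulting uniform lower bound over the finite index set, which is what converts convergence of the weighted norm into convergence of $x(k)$ itself.
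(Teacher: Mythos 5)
Your proposal is correct and follows essentially the same route as the paper's own proof: invoke Theorem~\ref{thm:stabilitybound}, use Lemma~\ref{lem:pstarIsGamma} to collapse the upper bound to $\alpha/\maxI$ (or set $\alpha=0$), squeeze, and conclude via positive definiteness of $P^\bi$. Your explicit use of $\alpha/I(k)\ge\alpha/\maxI$ and the uniform lower bound over the finite set $\nomI$ merely spells out steps the paper leaves implicit.
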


\begin{proof}
\edit{The proof is given in Appendix~\ref{app:stabilitycorr}.}
\end{proof}

From the above results we may note the following.
\begin{remark}
If the assumptions of Theorem~\ref{thm:stabilitybound} hold, \edit{then Corollary~\ref{corr:stability} can be used to ensure asymptotic stability of the origin,}
 except in the extremely rare case that
the underlying system $(A,B)$ becomes uncontrollable under down sampling by a
factor $\maxI$, see Lemma~\ref{lem:downsample}. Otherwise one may use
Lemma~\ref{lem:pstarIsGamma} to re-design $\nomI$ giving a new value of $\maxI$
which recovers the case $\optrate=\maxI$.
\end{remark}

\edit{
\begin{remark}
It is worth noticing that the developed framework is not limited to
just varying the time to the next sample $i$ as in
Fig.~\ref{fig:predictionhorizon}. One could expand this by using a ``multi-step'' approach wherein the inter sampling time is optimized over several frames before reverting to constant sampling with period $\rate$.
\end{remark}
}
\section{Extension to Multiple Loops}\label{sec:multiloop_extension}
Having detailed the controller for the case with a single loop on the network
and given conditions for when it is stabilizing we now continue with extending
to the case when we control multiple loops on the network, as described in
Fig.~\ref{fig:NCS}. The idea is that the controller $\mathcal{C}$ now will run
$s$ such single loop controllers described in Algorithm~\ref{alg:singleloop} in
parallel, one for each process $\mathcal{P}_\ell$,
$\ell\in\loopset=\{1,2,\ldots,s\}$, controlled over the network. To guarantee
conflict free communication on the network the controller $\mathcal{C}$ will,
at the same time, centrally coordinate the transmissions of the different
loops.

\subsection{Cost Function}
We start by extending the results in Section~\ref{sec:open_loop_problem} to the
case when we have multiple loops. The cost function we propose the controller
to minimize at every sampling instant for loop~$\ell$ is then
\begin{equation*}
\begin{aligned}
J_\ell(x_\ell(k),i,\mathcal{U}_\ell(i)) =&\frac{\alpha_\ell}{i}+ \sum_{l=0}^{i-1}\bigg(\|x_\ell(k+l)\|^2_{Q_\ell} + \|u_\ell(k)\|^2_{R_\ell} \bigg)\\
   +& \sum_{r=0}^\infty\Bigg(\sum_{l=0}^{\rate_\ell-1}\bigg(\|x_\ell(k+i+r\cdot \rate_\ell+l)\|^2_{Q_\ell} \\
   &\qquad+ \|u_\ell(k+i+r\cdot \rate_\ell)\|^2_{R_\ell} \bigg)\Bigg)\\
   \end{aligned}
\end{equation*}
derived in the same way as (\ref{eq:J_w_constraints}) now with $\alpha_\ell \in
\mathbb{R^+}$, $0<Q_\ell$ $0<R_\ell$ and period $\rate_\ell\in\mathbb{N}^+$
specific for the control of process $\mathcal{P}_\ell$ given by
(\ref{eq:multiprocess}). From this we can state the following.
\begin{definition}
We define the notation in the multiple loop case following
Definition~\ref{def:lifting}. For a matrix $E_\ell$, \eg, $A_\ell$ and
$Q_\ell$, we denote $\big(E_\ell\big)^\bi$ by $E^\bi_\ell$ .
\end{definition}
\begin{theorem}[Closed Form Solution]\label{thm:min_J_multi}
Assume that $0 < Q_\ell$, $0<R_\ell$ and that the pair
$(A^\bir_\ell,B^\bir_\ell)$ is controllable. Then
\begin{equation*}
    \min_{\mathcal{U}_\ell(i)} J_\ell(x_\ell(k),i,\mathcal{U}_\ell(i)) =\frac{\alpha_\ell}{i}+ \|x_\ell(k)\|^2_{P^\bi_\ell}
\end{equation*}
where
\begin{equation*}
\begin{aligned}
P^\bi_\ell &=  Q^\bi_\ell +  {A^\bi_\ell}^TP^\birl_\ell A^\bi_\ell- \big( {A^\bi_\ell}^TP^\birl_\ell B^\bi_\ell+
N^\bi_\ell\big)L^\bi_\ell\\
L^\bi_\ell &=\big(R^\bi_\ell+ {B^\bi_\ell}^TP^\birl_\ell B^\bi_\ell\big)^{-1}\big( {A^\bi_\ell}^TP^\birl_\ell B^\bi_\ell+ N^\bi_\ell\big)^T\\
\end{aligned}
\end{equation*}
and
\begin{equation*}
\begin{aligned}
P^\birl_\ell &=  Q^\birl_\ell +  {A^\birl_\ell}^TP^\birl_\ell A^\birl_\ell\\
&\qquad- \big( {A^\birl_\ell}^TP^\birl_\ell B^\birl_\ell+ N^\birl_\ell\big)L^\birl_\ell\\
L^\birl_\ell &=\big(R^\birl_\ell+ {B^\birl_\ell}^TP^\birl_\ell B^\birl_\ell\big)^{-1}\\
&\qquad\times\big( {A^\birl_\ell}^TP^\birl_\ell B^\birl_\ell+ N^\birl_\ell\big)^T.
\end{aligned}
\end{equation*}
Denoting the vector of all ones in $\mathbb{R}^{n}$ as $\onevec_{n}$, the
minimizing control signal sequence is given by
\begin{equation*}
    \mathcal{U}_\ell^*=\{-L^\bi_\ell x_\ell(k)\onevec_i^T,-L^\birl_\ell x_\ell(k+i+r\cdot p_\ell)\onevec_{p_\ell}^T,\ldots\},\; r\in\mathbb{N}.
\end{equation*}
\end{theorem}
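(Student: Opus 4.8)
The plan is to prove Theorem~\ref{thm:min_J_multi} by reducing it to the single-loop closed-form result already established in Theorem~\ref{thm:min_J}, observing that the per-loop cost $J_\ell$ is, up to relabeling, structurally identical to the single-loop cost (\ref{eq:J_w_constraints}). First I would make the correspondence precise: each process $\mathcal{P}_\ell$ in (\ref{eq:multiprocess}) is an autonomous LTI system whose state $x_\ell$ and input $u_\ell$ evolve independently of the other loops, and the cost $J_\ell$ depends only on the pair $(x_\ell,u_\ell)$ together with the loop-specific data $\alpha_\ell$, $Q_\ell$, $R_\ell$ and period $\rate_\ell$. Under the substitutions $A\to A_\ell$, $B\to B_\ell$, $Q\to Q_\ell$, $R\to R_\ell$, $\alpha\to\alpha_\ell$ and $\rate\to\rate_\ell$, the cost $J_\ell$ coincides term by term with (\ref{eq:J_w_constraints}), and by the multiple-loop definition the lifted quantities $A^\bi_\ell$, $B^\bi_\ell$, $Q^\bi_\ell$, $R^\bi_\ell$, $N^\bi_\ell$ reduce exactly to the single-loop objects of Definition~\ref{def:lifting} evaluated on loop $\ell$.

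Next I would verify that the hypotheses transfer verbatim. The assumptions $0<Q_\ell$, $0<R_\ell$ and controllability of the down-sampled pair $(A^\birl_\ell,B^\birl_\ell)$ are precisely the per-loop instances of the hypotheses required by Lemma~\ref{lem:P_conversion} and Theorem~\ref{thm:min_J}. With these in force, the argument then proceeds along the same three steps as the single-loop case: Lemma~\ref{lem:sum_collapse} collapses the finite sum $\sum_{l=0}^{i-1}$ into the quadratic form in $Q^\bi_\ell$, $R^\bi_\ell$ and $N^\bi_\ell$; Lemma~\ref{lem:P_conversion} evaluates the infinite downsampled tail starting from $x_\ell(k+i)$ as $\|x_\ell(k+i)\|^2_{P^\birl_\ell}$, with the periodic-Riccati solution $P^\birl_\ell$ and feedback $L^\birl_\ell$; and finally substituting $x_\ell(k+i)=A^\bi_\ell x_\ell(k)+B^\bi_\ell u_\ell(k)$ and minimizing the resulting quadratic over the single free level $u_\ell(k)$ yields the completion-of-squares expressions for $P^\bi_\ell$ and $L^\bi_\ell$, together with the optimizer $u_\ell(k)=-L^\bi_\ell x_\ell(k)$. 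Stacking this with the tail feedback gives the stated minimizing sequence $\mathcal{U}_\ell^*$, and the minimal value $\alpha_\ell/i+\|x_\ell(k)\|^2_{P^\bi_\ell}$.

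Because the content is entirely inherited from Theorem~\ref{thm:min_J}, there is no genuinely new computation, so the only point demanding care, and the one I would flag explicitly, is confirming that passing from one loop to $s$ loops introduces no coupling into this particular minimization. The constraints that tie the loops together, namely the conflict-free transmission requirement, act on the integer waiting times $I_\ell$ at the central coordination level and never appear inside any individual $J_\ell$; consequently the inner minimization over $\mathcal{U}_\ell(i)$ for a fixed $i$ remains a decoupled single-loop problem of exactly the form treated in (\ref{eq:min_J_u}). Once this decoupling is noted, the theorem follows immediately by applying Theorem~\ref{thm:min_J} to each loop~$\ell$ with the $\ell$-subscripted data, so the proof reduces to verifying the structural identity and the transfer of hypotheses rather than any fresh analysis.
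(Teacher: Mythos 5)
Your proposal is correct and matches the paper's own argument, which proves Theorem~\ref{thm:min_J_multi} simply by applying Theorem~\ref{thm:min_J} to each loop~$\ell$ with the $\ell$-subscripted data. Your additional remarks on hypothesis transfer and on the decoupling of the inner minimization from the scheduling constraints merely make explicit what the paper leaves implicit.
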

\begin{proof}
Application of Theorem~\ref{thm:min_J} on the individual loops.
\end{proof}

\subsection{Multiple Loop Self-Triggered MPC}
To formulate our multiple loop receding horizon control law we will re-use the
results in Section~\ref{sec:rec_hor_imp} and apply them on a per loop basis.

For each process $\mathcal{P}_\ell$ with dynamics given in
(\ref{eq:multiprocess}) let the weights $\alpha_\ell \in \mathbb{R^+}$,
$0<Q_\ell$ $0<R_\ell$ and period $\rate_\ell\in\mathbb{N}^+$ specific to the
process be defined. If we further define the \emph{finite} set
$\nomI_\ell\subset\mathbb{N}^+$ for loop~$\ell$ we may apply
Theorem~\ref{thm:min_J_multi} to compute the pairs $(P^\bi_\ell,\,L^\bi_\ell)$
$\forall\,i\in\nomI_\ell$. Provided of course that the pair
$(A^\birl_\ell,A^\birl_\ell)$ is controllable. 

As discussed in Section~\ref{sec:problem_formulation} when
choosing $i$ the controller must take into consideration what transmission times other loops have
reserved as well as overall network schedulability. Hence at time $k=k_\ell$
loop~$\ell$ is restricted to choose
$i\in\mathcal{I}_\ell(k_\ell)\subseteq\nomI_\ell$ where
$\mathcal{I}_\ell(k_\ell)$ contains the feasible values of $i$ which gives
collisions free scheduling of the network. 
\edit{Note that at time $k=k_\ell$ the control input $u_\ell$ and time for the next sample $I_\ell$ only have to be computed for a \emph{single} loop~$\ell$.} How $\mathcal{I}_\ell(k_\ell)$
should be constructed when multiple loops are present on the network is
discussed further later in this section. 

We may now continue with formulating our proposed algorithm for controlling
multiple processes over the network. The following algorithm is executed
whenever a sample is received by the controller.
\begin{algorithm}\label{alg:multiloop}
Multiple Loop Self-Triggered MPC\hspace*{\fill}
\begin{enumerate}
  \item At time $k=k_\ell$ the sample $x_\ell(k_\ell)$ of process
      $\mathcal{P}_\ell$ is transmitted by the sensor $\mathcal{S_\ell}$ to
      the controller $\mathcal{C}$.
  \item The controller $\mathcal{C}$ constructs $\mathcal{I_\ell}(k_\ell)$.
  \item Using $x_\ell(k_\ell)$ the controller $\mathcal{C}$ computes
\begin{equation*}
\begin{aligned}
I_\ell(k_\ell) &= \arg\min_{i\in\mathcal{I_\ell}(k_\ell)}\frac{\alpha_\ell}{i}+ \|x_\ell(k_\ell)\|^2_{P^\bi_\ell}\\
u_\ell(k_\ell)&=-L_\ell^{(I(k_\ell))} x_\ell(k_\ell).
\end{aligned}
\end{equation*}
\item
\begin{enumerate}
\item The controller $\mathcal{C}$ sends $u_\ell(k_\ell)$ to the actuator
    $\mathcal{A_\ell}$ which applies $u_\ell(k)=u_\ell(k_\ell)$ to
    (\ref{eq:multiprocess}) until $k=k_\ell+I_\ell(k_\ell)$.
\item The \emph{Network Manager} is requested to schedule the sensor
    $\mathcal{S_\ell}$ to transmit a new sample at time
    $k=k_\ell+I_\ell(k_\ell)$.
 \end{enumerate}
 \end{enumerate}
\end{algorithm}
\edit{Note that at time $k$ step 3 only needs to be performed for loop $l$.} 

\edit{\begin{remark}\label{rem:multipleinit}
When the controller is initialized at time $k=0$ it is assumed that the
controller has knowledge of the state $x_\ell(0)$ for all processes
$\mathcal{P}_\ell$ controlled over the network. It will then execute
Algorithm~\ref{alg:multiloop} entering at step~$2$, in the order of increasing
loop index $\ell$.
\end{remark}}

\subsection{Schedulability}
What remains to be detailed in the multiple loop receding horizon control law
is a mechanism for loop~$\ell$ to choose $\mathcal{I}_\ell(k_\ell)$ to achieve
collision free scheduling. We now continue with giving conditions for when this
holds.

First we note that when using Theorem~\ref{thm:min_J_multi} we make the
implicit assumption that it is possible to apply the corresponding optimal
control signal sequence $\mathcal{U}^*_\ell$. For this to be possible we must
be able to measure the state $x_\ell(k)$ at the future time instances
\begin{multline}\label{eq:samplingpattern}
S_\ell(k_\ell)=\{k_\ell+I_\ell(k_\ell),\ k_\ell+I_\ell(k_\ell)+\rate_\ell,\\
k_\ell+I_\ell(k_\ell)+2\rate_\ell,\ k_\ell+I_\ell(k_\ell)+3\rate_\ell,\
\ldots\}.
\end{multline}
Hence this sampling pattern must be reserved for use by sensor
$\mathcal{S}_\ell$. We state the following to give conditions for when this is
possible.

\begin{lemma}\label{lem:persistent}
Let loop~$\ell$ choose its set $\mathcal{I}_\ell(k_\ell)$ of feasible times to
wait until the next sample to be
\begin{multline*}
\mathcal{I}_\ell(k_\ell)=\{i\in\nomI_\ell|i\neq k_q^\text{next}-k_\ell+n\cdot\rate_q-m\cdot\rate_\ell,\\
 m,n\in\mathbb{N},\, q\in\loopset\setminus\{\ell\}\}
\end{multline*}
where $k_q^\text{next}$ is the next transmission time of sensor
$\mathcal{S}_q$. Then it is possible to reserve the needed sampling pattern
$S_\ell(k_\ell)$ in (\ref{eq:samplingpattern}) at time $k=k_\ell$.
\end{lemma}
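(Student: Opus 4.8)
The plan is to reduce the statement ``it is possible to reserve $S_\ell(k_\ell)$'' to the purely combinatorial claim that the requested pattern $S_\ell(k_\ell)$ is \emph{disjoint} from every transmission instant already committed to the other sensors; disjointness is exactly what conflict-free scheduling means under the one-transmission-per-instant rule of Section~\ref{sec:problem_formulation}, so proving disjointness proves the lemma. First I would pin down the structure of the occupied slots. At time $k=k_\ell$ each other sensor $\mathcal{S}_q$, $q\in\loopset\setminus\{\ell\}$, has a committed sampling pattern which, reading (\ref{eq:samplingpattern}) for loop $q$, is the arithmetic progression $\{k_q^\text{next}+n\cdot\rate_q : n\in\mathbb{N}\}$, where $k_q^\text{next}$ is its next transmission time. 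The slots loop $\ell$ wishes to claim are, again by (\ref{eq:samplingpattern}), $S_\ell(k_\ell)=\{k_\ell+i+m\cdot\rate_\ell : m\in\mathbb{N}\}$.

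Next I would write down when these two periodic sets meet. A conflict with $\mathcal{S}_q$ occurs if and only if there exist $m,n\in\mathbb{N}$ with
\begin{equation*}
k_\ell+i+m\cdot\rate_\ell = k_q^\text{next}+n\cdot\rate_q,
\end{equation*}
and solving for the single free quantity $i$ gives
\begin{equation*}
i = k_q^\text{next}-k_\ell+n\cdot\rate_q-m\cdot\rate_\ell.
\end{equation*}
Letting $m,n$ range over $\mathbb{N}$ and $q$ over $\loopset\setminus\{\ell\}$ enumerates exactly the wait-times that collide with some committed slot. The definition of $\mathcal{I}_\ell(k_\ell)$ deletes precisely these values from $\nomI_\ell$, so any $i\in\mathcal{I}_\ell(k_\ell)$ falsifies the collision equation for every $q$, $m$ and $n$ at once. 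Hence $S_\ell(k_\ell)$ shares no instant with any already-committed progression, the Network Manager can grant the whole reservation, and the claim follows. The cleanest write-up is the contrapositive: were the reservation impossible, some slot of $S_\ell(k_\ell)$ would coincide with an occupied slot, which would force $i$ into the excluded set. Because the exclusion covers \emph{all} $m,n$, the two infinite progressions stay disjoint for all future time, which is what makes the reservation persistent.

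The step I expect to carry the real weight is the first one: correctly asserting that each other loop's occupied slots form the single period-$\rate_q$ progression $\{k_q^\text{next}+n\cdot\rate_q\}$. This leans on the fact that, once loop $q$ has sampled and fixed its wait, Theorem~\ref{thm:min_J_multi} commits it to the down-sampled period $\rate_q$ over the entire tail of its horizon, so its reservation really is a pure arithmetic progression rather than an irregular set; the remaining algebra is then immediate. I would also flag the caveat the statement leaves implicit: the lemma guarantees feasibility of the reservation but not that $\mathcal{I}_\ell(k_\ell)$ is non-empty, so a separate schedulability argument is needed to ensure that a valid choice of $i$ actually exists.
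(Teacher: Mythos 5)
Your proposal is correct and follows essentially the same argument as the paper: both express the already-reserved slots of each other loop $q$ as the arithmetic progression $\{k_q^{\text{next}}+n\cdot\rate_q\}$ (the paper writes this as $\{k_q+I_q(k_q)+n\cdot\rate_q\}$ before identifying $k_q^{\text{next}}=k_q+I_q(k_q)$), set up the collision equation $k_\ell+i+m\cdot\rate_\ell=k_q^{\text{next}}+n\cdot\rate_q$, and solve for $i$ to see that the excluded set is exactly the set of colliding wait-times. Your closing caveat about non-emptiness is also exactly the point the paper raises immediately after the lemma and resolves via Assumption~\ref{ass:alwayspersistent} and Theorem~\ref{thm:alwayspersistent}.
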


\begin{proof}
\edit{The proof is given in Appendix~\ref{app:persistent}.}
\end{proof}

Constructing $\mathcal{I}_\ell(k_\ell)$ as above we are not guaranteed that
$\mathcal{I}_\ell(k_\ell)\neq\emptyset$. To guarantee this we make the
following assumption.

\begin{assumption}\label{ass:alwayspersistent}
Assume that for every loop~$\ell$ on the network $\nomI_\ell=\nomI$ and
$\rate_\ell=\rate$. Further assume that $\loopset\subseteq\nomI$ and
$\max\loopset\leq\rate$.
\end{assumption}

\begin{theorem}\label{thm:alwayspersistent}
Let Assumption~\ref{ass:alwayspersistent} hold. If every loop~$\ell$ chooses
\begin{equation*}
\begin{aligned}
\mathcal{I}_\ell(k_\ell)&=\{i\in\nomI|i\neq k_q^\text{next}-k_\ell+r\cdot \rate,\,r\in\mathbb{Z},\, q\in\loopset\setminus\{\ell\} \}
\end{aligned}
\end{equation*}
all transmissions on the network will be conflict free and it will always be
possible to reserve the needed sampling pattern $S_\ell(k_\ell)$ in
(\ref{eq:samplingpattern}).
\end{theorem}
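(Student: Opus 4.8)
The plan is to separate the theorem's two assertions: (i) that picking $i\in\mathcal{I}_\ell(k_\ell)$ produces conflict free transmissions, and (ii) that $\mathcal{I}_\ell(k_\ell)$ is never empty, so such a pick is always available. For (i), I would first note that Assumption~\ref{ass:alwayspersistent} forces $\rate_q=\rate_\ell=\rate$ for every pair of loops, so that the general feasibility set of Lemma~\ref{lem:persistent} collapses: the forbidden value $k_q^\text{next}-k_\ell+n\rate_q-m\rate_\ell$ becomes $k_q^\text{next}-k_\ell+(n-m)\rate$, and as $m,n$ range over $\mathbb{N}$ the integer $n-m$ ranges over all of $\mathbb{Z}$. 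Hence the set $\mathcal{I}_\ell(k_\ell)$ of the theorem is exactly the set of Lemma~\ref{lem:persistent} specialized to equal periods, so that lemma applies verbatim and guarantees that whenever loop~$\ell$ chooses $i\in\mathcal{I}_\ell(k_\ell)$, its reserved pattern $S_\ell(k_\ell)$ in (\ref{eq:samplingpattern}) does not intersect the pattern of any other loop $q\neq\ell$. Since by Remark~\ref{rem:multipleinit} the loops are scheduled sequentially and, by conflict freeness, only one loop reschedules at any instant, an induction over rescheduling events (the conflict freeness of the current schedule being the hypothesis, the constraint just described being the inductive step) shows the full network schedule stays conflict free for all time.

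The substantive part is (ii), the non-emptiness of $\mathcal{I}_\ell(k_\ell)$, which I would argue modulo $\rate$. Because every loop shares the common period $\rate$, loop $q$'s reserved transmission times all lie in the single residue class $k_q^\text{next}\bmod\rate$, so the values of $i$ it forbids to loop~$\ell$ form the single class $(k_q^\text{next}-k_\ell)\bmod\rate$. Ranging over the $s-1$ competitors $q\in\loopset\setminus\{\ell\}$ therefore rules out \emph{at most} $s-1$ residue classes modulo $\rate$. On the other hand, Assumption~\ref{ass:alwayspersistent} gives $\loopset=\{1,\dots,s\}\subseteq\nomI$, so the $s$ candidate waiting times $1,2,\dots,s$ are all admissible members of $\nomI$; and since $\max\loopset=s\leq\rate$ these $s$ consecutive integers occupy $s$ \emph{distinct} residue classes modulo $\rate$. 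A pigeonhole count then closes the argument: $s$ distinct classes cannot all be struck by at most $s-1$ forbidden classes, so at least one of $1,\dots,s$ survives and lies in $\mathcal{I}_\ell(k_\ell)$, which is thus non-empty.

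I expect the main obstacle to be making the feasibility count (ii) rigorous as a statement that holds at \emph{every} instant rather than once. Two points need care. First, the bound ``$s-1$ forbidden classes'' presumes each competitor contributes a single well-defined residue class, which is precisely the periodicity induced by the common $\rate$; one must confirm this at every rescheduling time, including the initial sweep of Remark~\ref{rem:multipleinit}, where only the already-processed loops $1,\dots,\ell-1$ are present (there the forbidden set is even smaller, so feasibility holds a fortiori, and symmetry of the non-intersection relation makes the eventual pairwise conflict freeness hold after the sweep completes). Second, one must check that the two roles of Assumption~\ref{ass:alwayspersistent} are invoked correctly and are each necessary: $\loopset\subseteq\nomI$ ensures the candidate set $\{1,\dots,s\}$ lies inside the admissible set $\nomI$, while $\max\loopset\leq\rate$ ensures these candidates fall in distinct residue classes so that the strict inequality $s>s-1$ can be used. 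Tying (i) and (ii) together through the induction over rescheduling events then yields the theorem.
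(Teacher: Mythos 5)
Your proposal is correct and follows essentially the same route as the paper: reduce to Lemma~\ref{lem:persistent} with equal periods (so the forbidden offsets collapse to $r\cdot\rate$, $r\in\mathbb{Z}$), then observe that each competitor $q$ excludes at most one element of $\loopset\subseteq\nomI$ because $\max\loopset\leq\rate$, so at most $s-1$ of the $s$ candidates are struck and one survives. Your residue-class/pigeonhole phrasing and the explicit induction over rescheduling events are just more detailed renderings of the paper's counting argument and its implicit reliance on Lemma~\ref{lem:persistent} for conflict freeness.
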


\begin{proof}
\edit{The proof is given in Appendix~\ref{app:alwayspersistent}.}
\end{proof}

\begin{remark}
The result in Lemma~\ref{lem:persistent} requires the reservation of an
infinite sequence. This is no longer required in
Theorem~\ref{thm:alwayspersistent} as all loops cooperate when choosing the set
of feasible times to wait. In fact loop~$\ell$ only needs to know the current
time $k_\ell$, the period $\rate$ and the times when the other loops will
transmit next $k_q^\text{next}$ $\forall\,q\in\loopset\setminus\{\ell\}$ in
order to find its own value $\mathcal{I}_\ell(k_\ell)$.
\end{remark}

\begin{remark}\label{rem:pstarInNext}
If Assumption~\ref{ass:alwayspersistent} holds and every loop on the network
chooses $\mathcal{I}_\ell(k_\ell)$ according to
Theorem~\ref{thm:alwayspersistent}, then it is guaranteed that at time $k_\ell$
we can reserve (\ref{eq:samplingpattern}) and that no other loop can make
conflicting reservations. Hence at time $k_\ell+I_\ell(k_\ell)$ the sequence
\begin{multline*}
S_\ell(k_\ell+I_\ell(k_\ell))=\{\ k_\ell+I_\ell(k_\ell)+\rate,\ \\
k_\ell+I_\ell(k_\ell)+2\rate,\ k_\ell+I_\ell(k_\ell)+3\rate,\ldots\}.
\end{multline*}
is guaranteed to be available. Thus
$\rate\in\mathcal{I}_\ell(k_\ell+I_\ell(k_\ell))$.
\end{remark}

\subsection{Stability}
We continue with giving conditions for when the multiple loop receding horizon
control law described in Algorithm~\ref{alg:multiloop} is stabilizing.
Extending the theory developed Section~\ref{sec:rec_hor_imp} to the multiple
loop case we may state the following.

\begin{theorem}\label{thm:multistabilitybound}
Assume $0<Q_\ell$, $0<R_\ell$ and that $(A_\ell,B_\ell)$ is controllable.
Further let Assumption~\ref{ass:alwayspersistent} hold. If we then choose
$i\in\mathcal{I}_\ell(k)\subseteq\nomI\subset \mathbb{N}^+$, with
$\mathcal{I}_\ell(k)$ chosen as in Theorem~\ref{thm:alwayspersistent}, and
$\rate=\optrate$ given by
\begin{equation}\label{eq:stab_rate_cond_multi}
\optrate=\max \{i | i\in \nomI,\forall\,\ell\ \forall\lambda\in\lambda(A_\ell)\ \lambda^i \neq 1 \text{ if }\lambda \neq 1\}
\end{equation}
and apply Algorithm~\ref{alg:multiloop}, then as $k\rightarrow\infty$
\begin{equation*}
\frac{\alpha_\ell}{\maxI}\leq \min_{i\in\mathcal{I}_\ell(k)}\bigg(\frac{\alpha_\ell}{i}+
\|x_\ell(k)\|^2_{P^\bi_\ell}\bigg) \leq
\frac{\alpha_\ell}{\epsilon_\ell}\bigg(\frac{1}{\optrate}-(1-\epsilon_\ell)\frac{1}{\maxI}\bigg)
\end{equation*}
where $\maxI=\max\nomI$ and $\epsilon_\ell$ is the largest value in the
interval $(0,1]$ which ${\forall i\in\nomI}$ fulfills
\begin{equation*}
 \big(A^\bi_\ell - B^\bi_\ell L^\bi_\ell\big)^TP^\biro_\ell\big(A^\bi_\ell - B^\bi_\ell L^\bi_\ell\big) \leq (1-\epsilon_\ell)P^\bi_\ell.
\end{equation*}
\end{theorem}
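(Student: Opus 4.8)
The plan is to reduce Theorem~\ref{thm:multistabilitybound} to the single-loop result Theorem~\ref{thm:stabilitybound} by showing that, under Assumption~\ref{ass:alwayspersistent} and the scheduling rule of Theorem~\ref{thm:alwayspersistent}, each loop~$\ell$ eventually behaves exactly like an isolated single loop running Algorithm~\ref{alg:singleloop} with rate $\rate=\optrate$. The first step is to invoke Theorem~\ref{thm:alwayspersistent} to guarantee that $\mathcal{I}_\ell(k)\neq\emptyset$ at every decision instant, so the $\arg\min$ in Algorithm~\ref{alg:multiloop} is always well defined and each loop can reserve its required sampling pattern $S_\ell(k_\ell)$ without conflict. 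This decouples the loops: once loop~$\ell$ has reserved its pattern, its state evolves under (\ref{eq:multiprocess}) with the control sequence $\mathcal{U}_\ell^*$ from Theorem~\ref{thm:min_J_multi}, independently of the other loops' reservations.

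Next I would verify that the hypotheses of the single-loop bound hold per loop. The choice of $\optrate$ in (\ref{eq:stab_rate_cond_multi}) is exactly the multi-loop analogue of (\ref{eq:stab_rate_cond}): it is the largest $i\in\nomI$ for which no $A_\ell$ has an eigenvalue $\lambda\neq1$ with $\lambda^i=1$, simultaneously over all $\ell$. By Lemma~\ref{lem:downsample} this makes $(A^\biro_\ell,B^\biro_\ell)$ controllable for every $\ell$, given that each $(A_\ell,B_\ell)$ is controllable by assumption. Hence for each fixed $\ell$ the premises $0<Q_\ell$, $0<R_\ell$, controllability of the down-sampled pair, and the rate condition all hold, and the $\epsilon_\ell$ defined by the stated Lyapunov inequality is guaranteed to exist by the same argument used in Theorem~\ref{thm:stabilitybound}.

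I would then apply Theorem~\ref{thm:stabilitybound} to loop~$\ell$ in isolation. The one subtlety is that in the multi-loop setting the feasible set is $\mathcal{I}_\ell(k)\subseteq\nomI$ rather than all of $\nomI$, so the per-step minimizer may differ from the single-loop one. The key observation, which I would establish using Remark~\ref{rem:pstarInNext}, is that once a loop has made a reservation it is always guaranteed $\rate=\optrate\in\mathcal{I}_\ell(k_\ell+I_\ell(k_\ell))$ at its next instant; thus the down-sampled rate remains attainable indefinitely, and the contraction certified by $\epsilon_\ell$ applies along the actual trajectory. With $\maxI=\max\nomI$ fixed across loops by Assumption~\ref{ass:alwayspersistent}, the same Lyapunov-function argument as in the single-loop proof yields the two-sided asymptotic bound on $\min_{i\in\mathcal{I}_\ell(k)}\big(\tfrac{\alpha_\ell}{i}+\|x_\ell(k)\|^2_{P^\bi_\ell}\big)$.

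The main obstacle I anticipate is precisely this last point: in the single-loop theorem the minimization is over the full set $\nomI$, whereas here it is over the possibly smaller, time-varying set $\mathcal{I}_\ell(k)$, so the lower bound $\tfrac{\alpha_\ell}{\maxI}$ and the upper bound derived from the $\epsilon_\ell$-contraction must be shown to survive this restriction. The resolution is that the lower bound holds for any nonempty subset of $\nomI$ since $\tfrac{\alpha_\ell}{i}\geq\tfrac{\alpha_\ell}{\maxI}$ for all $i\in\nomI$, while the upper bound only requires that the \emph{particular} rate $\optrate$ achieving the contraction stays feasible, which is exactly what Remark~\ref{rem:pstarInNext} supplies. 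Once feasibility of $\optrate$ is secured at every step, the argument is a per-loop transcription of Theorem~\ref{thm:stabilitybound} and the remaining manipulations are routine.
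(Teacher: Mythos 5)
Your proposal is correct and follows essentially the same route as the paper: the paper's proof is a direct per-loop application of Theorem~\ref{thm:stabilitybound}, noting that the scheduling rule keeps $\mathcal{I}_\ell(k)$ nonempty so the value function is always defined, and that the one critical step is the feasibility of the suboptimal choice $j=\optrate$ at the next sampling instant, secured exactly as you argue via Remark~\ref{rem:pstarInNext}. Your additional observation that the lower bound $\alpha_\ell/\maxI$ survives restriction to any nonempty subset of $\nomI$ is a correct (if tacit in the paper) detail.
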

\begin{proof}
\edit{The proof is given in Appendix~\ref{app:multistabilitybound}.}
\end{proof}

\begin{corollary}\label{corr:multistability}
Assume $0<Q_\ell$, $0<R_\ell$ and that $(A_\ell,B_\ell)$ is controllable.
Further let Assumption~\ref{ass:alwayspersistent} hold. In addition let $\nomI$
be chosen so that the resulting $\maxI=\max\nomI$ guarantees that
Assumption~\ref{ass:pstarIsGamma} holds for every loop~$\ell$ or alternatively
let $\alpha_\ell=0$ for every loop~$\ell$. If we then choose
$i\in\mathcal{I}_\ell(k)\subseteq\nomI\subset \mathbb{N}^+$, with
$\mathcal{I}_\ell(k)$ chosen as in Theorem~\ref{thm:alwayspersistent}, and
$\rate=\optrate$ given by (\ref{eq:stab_rate_cond_multi}) and apply
Algorithm~\ref{alg:multiloop} it holds that
\begin{equation*}
 \lim_{k\rightarrow\infty} x_\ell(k)=0.
\end{equation*}
\end{corollary}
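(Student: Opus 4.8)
The plan is to reduce the multi-loop claim to a per-loop application of the argument behind Corollary~\ref{corr:stability}. First I would fix an arbitrary loop $\ell\in\loopset$ and observe that the processes in (\ref{eq:multiprocess}) are \emph{dynamically decoupled}: the evolution of $x_\ell$ depends only on $x_\ell$ and $u_\ell$, and the sole effect of the shared network is to restrict the admissible wait times to $\mathcal{I}_\ell(k)\subseteq\nomI$. Since this restriction is already built into the practical-stability estimate of Theorem~\ref{thm:multistabilitybound}, it suffices to establish $\lim_{k\to\infty}x_\ell(k)=0$ for each $\ell$ separately, mirroring the single-loop proof.

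Next I would invoke Theorem~\ref{thm:multistabilitybound} to obtain, as $k\to\infty$, the sandwich
\[\frac{\alpha_\ell}{\maxI}\le \min_{i\in\mathcal{I}_\ell(k)}\Big(\frac{\alpha_\ell}{i}+\|x_\ell(k)\|^2_{P^\bi_\ell}\Big)\le \frac{\alpha_\ell}{\epsilon_\ell}\Big(\frac{1}{\optrate}-(1-\epsilon_\ell)\frac{1}{\maxI}\Big),\]
and split into the two hypothesised cases. If $\alpha_\ell=0$ for every loop, both outer terms vanish, so $\min_{i}\|x_\ell(k)\|^2_{P^\bi_\ell}\to 0$; since $0<Q_\ell,R_\ell$ render every $P^\bi_\ell$ positive definite and $\nomI$ is finite, there is a uniform $\lambda_{\min}>0$ with $\min_i\|x_\ell(k)\|^2_{P^\bi_\ell}\ge\lambda_{\min}\|x_\ell(k)\|^2$, forcing $x_\ell(k)\to0$. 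If instead Assumption~\ref{ass:pstarIsGamma} holds for every loop, I would first show $\optrate=\maxI$: any eigenvalue $\lambda\in\lambda(A_\ell)$ with $\lambda\neq1$ and $\lambda^{\maxI}=1$ would be a $\maxI$-th root of unity with $|\lambda|=1$ and $\angle\lambda=\tfrac{2\pi}{\maxI}n$, contradicting Assumption~\ref{ass:pstarIsGamma} (this is Lemma~\ref{lem:pstarIsGamma} applied per loop); hence $\maxI$ satisfies the defining condition of (\ref{eq:stab_rate_cond_multi}) and, being $\max\nomI$, equals $\optrate$. Substituting $\optrate=\maxI$ the upper bound collapses algebraically to $\frac{\alpha_\ell}{\maxI}$, so the sandwich pins the limit to $\min_{i\in\mathcal{I}_\ell(k)}\big(\frac{\alpha_\ell}{i}+\|x_\ell(k)\|^2_{P^\bi_\ell}\big)\to\frac{\alpha_\ell}{\maxI}$.

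The decisive step is to turn this cost limit into state convergence (the case $\alpha_\ell>0$). Since $i\le\maxI$ for all $i\in\nomI$, the value $\frac{\alpha_\ell}{\maxI}$ is the strictly smallest of the finitely many $\frac{\alpha_\ell}{i}$. Writing $i_k^*$ for the minimiser, we have $\frac{\alpha_\ell}{i_k^*}\ge\frac{\alpha_\ell}{\maxI}$ and $\|x_\ell(k)\|^2_{P^{(i_k^*)}_\ell}\ge0$, and as their sum tends to $\frac{\alpha_\ell}{\maxI}$ both terms must converge to their lower limits: $\frac{\alpha_\ell}{i_k^*}\to\frac{\alpha_\ell}{\maxI}$ and $\|x_\ell(k)\|^2_{P^{(i_k^*)}_\ell}\to0$. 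By discreteness the first forces $i_k^*=\maxI$ for all large $k$; here Remark~\ref{rem:pstarInNext} is essential, as it guarantees that $\maxI=\optrate=\rate$ is genuinely admissible, i.e. $\maxI\in\mathcal{I}_\ell(k_\ell)$ at every sampling instant, so the minimiser is not artificially excluded. The second term then gives $\|x_\ell(k)\|^2_{P^{(\maxI)}_\ell}\to0$, and positive definiteness of $P^{(\maxI)}_\ell$ yields $x_\ell(k)\to0$ along the sampling instants. Finally I would lift convergence to all $k$: between consecutive samples the control is held constant at $-L_\ell^{(I_\ell)}x_\ell(k_\ell)$, so each intersample state is a fixed linear image $\big(A_\ell^{\,j}-B_\ell^{(j)}L_\ell^{(I_\ell)}\big)x_\ell(k_\ell)$ of the sampled value with $j\le\maxI$; over this finite, bounded family of maps $\|x_\ell(k)\|\le C\|x_\ell(k_\ell)\|$, whence $x_\ell(k)\to0$ for every $k$.

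I expect the main obstacle to be precisely this passage from a convergent \emph{cost} (a minimum taken over the time-varying feasible set $\mathcal{I}_\ell(k)$) to convergence of the \emph{state}, rather than merely to a level set of the cost; the subtle point is justifying, via Remark~\ref{rem:pstarInNext}, that $\maxI$ remains a selectable option so that the discreteness-and-positive-definiteness argument actually bites. The subsequent extension from sampling instants to all intersample times is routine once the bounded-linear-map observation above is in place.
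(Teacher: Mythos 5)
Your proposal is correct and follows essentially the same route as the paper, which disposes of this corollary in one line by declaring it ``analogous to the proof of Corollary~\ref{corr:stability}'': per-loop application of the sandwich from Theorem~\ref{thm:multistabilitybound}, collapse of the upper bound via $\optrate=\maxI$ (Lemma~\ref{lem:pstarIsGamma}) or via $\alpha_\ell=0$, and positive definiteness of $P^\bi_\ell$ to force the state to zero. Your write-up is in fact more careful than the paper's, since you explicitly justify the step from cost convergence to state convergence (splitting the two nonnegative excess terms and using discreteness of $\nomI$ together with the feasibility of $\maxI$ guaranteed by Remark~\ref{rem:pstarInNext}) and you supply the intersample bounded-linear-map argument that the paper leaves implicit.
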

\begin{proof}
The proof follows from the results in Theorem~\ref{thm:multistabilitybound}
analogous to the proof of Corollary~\ref{corr:stability}.
\end{proof}

\section{Simulation Studies}\label{sec:numerical_results}
To illustrate the proposed theory we now continue with giving simulations.
First we show how the control law works when a single loop is controlled over
the network and focus on the loop specific mechanisms of the controller.
Secondly we illustrate how the controller works when multiple loops are present
on the network and focus on how the controller allocates network access to
different loops.

\subsection{Single Loop}\label{sec:examples}
Let us exemplify and discuss how the controller handles the control performance
versus communication rate trade-off in an individual loop. We do this by
studying the case with a single system on the network. The system we study is
the single integrator system which we discretize using sample and hold with
sampling time $T_s=1$\,s giving us $x(k+1)=Ax(k)+Bu(k)$ with $(A,B)=(1,1)$.
Since we want the resulting self-triggered MPC described in
Algorithm~\ref{alg:singleloop} to be stabilizing we need to make sure that our
design fulfils the conditions of Theorem~\ref{thm:stabilitybound}. If we
further want it to be asymptotically stabilizing we in addition need it to
fulfil the conditions of Corollary~\ref{corr:stability}.

The design procedure is then as follows: First we note that the system $(A,B)$
is controllable. The next step is to decide the weights $0<Q$ and $0<R$ in the
quadratic cost function (\ref{eq:J}). This is done in the same way as in
classical linear quadratic control, see e.g. \cite{mac02}. Here we for
simplicity choose $Q=1$ and $R=1$. We note that the system only has the
eigenvalue $\lambda=1$, fulfilling Assumption~\ref{ass:pstarIsGamma}, so that
(\ref{eq:stab_rate_cond}) in Theorem~\ref{thm:stabilitybound} gives
$\optrate=\max \nomI$. Hence Corollary~\ref{corr:stability}, and thus
Theorem~\ref{thm:stabilitybound}, will hold for every choice of $\nomI$. This
means that we may choose the elements in $\nomI$, \ie, the possible down
sampling rates, freely. A natural way to choose them is to decide on a maximum
allowed down sampling rate and then choose $\nomI$ to contain all rates from
$1$ up to this number. Let's say that we here want the system to be sampled at
least every $5\cdot T_s$\,s, then a good choice is $\nomI=\{1,2,3,4,5\}$,
giving $\optrate=\max \nomI=5$.

Now having guaranteed that the conditions of Theorem~\ref{thm:stabilitybound}
and Corollary~\ref{corr:stability} hold we have also guaranteed that the
conditions of Theorem~\ref{thm:min_J} are fulfilled. Hence we may use it to
compute the state feedback gains and cost function matrices that are used in
Algorithm~\ref{alg:singleloop}. The results from these computations are shown
in Table~\ref{tab:num_int_ex}, together with some of the intermediate variables
from Definition~\ref{def:lifting}.
\begin{table}
\setlength{\tabcolsep}{4pt} \setlength{\extrarowheight}{2pt} \centering
\caption{The pre-computed control laws with related cost functions and
intermediate variables.}\label{tab:num_int_ex}
\begin{tabular}{c|cc|ccc|ccc}
  $i$ & $A^\bi$ & $B^\bi$ & $Q^\bi$ & $R^\bi$ & $N^\bi$ & $L^\bi$ & $P^\bi$ & $V^\bi(x)$ \\
  \hline
  1 & 1 & 1 & 1 & 1  & 0  & 0.70 & 1.70 & $\alpha/1 + P^{(1)}\cdot x^2$ \\
  2 & 1 & 2 & 2 & 3  & 1  & 0.46 & 1.73 & $\alpha/2 + P^{(2)}\cdot x^2$ \\
  3 & 1 & 3 & 3 & 8  & 3  & 0.35 & 1.89 & $\alpha/3 + P^{(3)}\cdot x^2$ \\
  4 & 1 & 4 & 4 & 18 & 6  & 0.28 & 2.08 & $\alpha/4 + P^{(4)}\cdot x^2$ \\
  5=$\optrate$ & 1 & 5 & 5 & 35 & 10 & 0.23 & 2.30 & $\alpha/5 + P^{(5)}\cdot x^2$
\end{tabular}
\end{table}
Here we see that the cost functions are quadratic functions in the state $x$
where the coefficients $P^\bi$ are functions of $Q$ and $R$. We also see that
the cost to sample $\alpha/i$ enters linearly and as we change it we will
change the offset level of the curves and thereby their values related to each
other. However it will not affect the state feedback gains.

A graphical illustration of the cost functions in Table~\ref{tab:num_int_ex},
for the choice $\alpha=0.2$, is shown in Fig.~\ref{fig:statespace_partition}
together with the curve $I(k)=\arg\min_{\nomI}V^\bi(x(k))$, \ie, the index of
the cost function which has the lowest value for a given state $x(k)$.
\begin{figure}
    \centering
    \psfrag{v}[][][1]{}
    \psfrag{x}[][][1]{$x(k)$}
    \psfrag{a}[][r][0.5]{$V^{(1)}$}
    \psfrag{b}[][r][0.5]{$V^{(2)}$}
    \psfrag{c}[][r][0.5]{$V^{(3)}$}
    \psfrag{d}[][r][0.5]{$V^{(4)}$}
    \psfrag{e}[][r][0.5]{$V^{(5)}$}
    \psfrag{k}[][][0.5]{$\arg\min_{\mathcal{I}(k)}V^{(i)}(x(k))$}
    \includegraphics[width=0.39\textwidth]{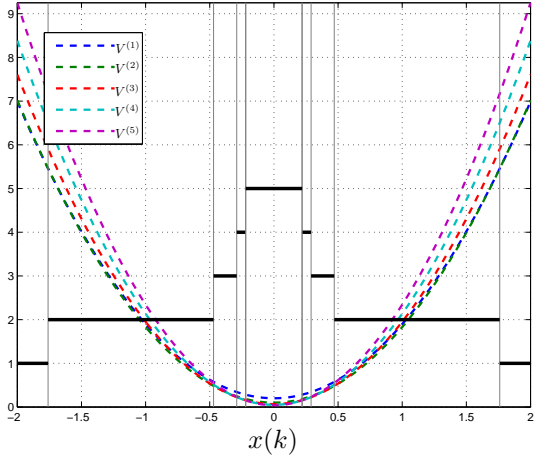}
    \caption{The cost functions $V^\bi(x(k))$ (dashed) together with the partitioning of the state space and the time to wait $I(k)=\arg\min_{\nomI}V^\bi(x(k))$ (solid).}
    \label{fig:statespace_partition}
\end{figure}
This is the partitioning of the state space that the self-triggered MPC
controller will use to choose which of the state feedback gains to apply and
how long to wait before sampling again.

Applying our self-triggered MPC described in Algorithm~\ref{alg:singleloop}
using the results in Table~\ref{tab:num_int_ex} to our integrator system when
initialized in $x(0)=2$ we get the response shown in
Fig.~\ref{fig:scalar_adap}.
Note here that the system will converge to the fixed
sampling rate $\optrate$ as the state converges.

It may now appear as it is sufficient to use periodic control and sample the
system every $\optrate\cdot T_s$\,s to get good control performance. To compare
the performance of this periodic sampling strategy with the self-triggered
strategy above we apply the control which minimizes the same cost function
(\ref{eq:J}) as above with the exception that the system now may only be
sampled every $\optrate\cdot T_s$\,s. This is in fact the same as using the
receding horizon control above while restricting the controller to choose
$i=\optrate$ every time. The resulting simulations are shown in
Fig.~\ref{fig:scalar_slow}. As seen, there is a large degradation of the
performance in the transient while the stationary behaviour is almost the same.
By this we can conclude that it is not sufficient to sample the system every
$\optrate\cdot T_s$\,s if we want to achieve the same transient performance as
with the self-triggered sampling.

In the initial transient response the self-triggered MPC controller sampled
after one time instant. This indicates that there is performance to gain by
sampling every time instant. To investigate this we apply the control which
minimizes the same cost function (\ref{eq:J}), now with the exception that the
system may be sampled every $T_s$\,s, \ie, classical unconstrained linear
quadratic control. Now simulating the system we get the response shown in
Fig.~\ref{fig:scalar_fast}. As expected we get slightly better transient
performance in this case compared to our self-triggered sampling scheme, it is
however comparable. Note however that this improvement comes at the cost of a
drastically increased communication need, which may not be suitable for systems
where multiple loops share the same wireless medium.
\begin{figure}
\centering
    \psfrag{k}[][][.9]{$k$}
    \psfrag{y}[][][.9]{$x(k)$}
    \psfrag{u}[][][.9]{$u(k)$}
    \subfigure[System response of the integrator system when minimizing the cost by using our single loop self-triggered MPC.]{
        \includegraphics[width=0.39\textwidth]{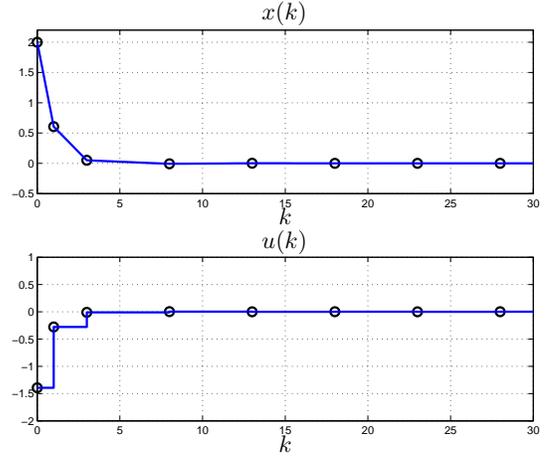}
        \label{fig:scalar_adap}
    }
    \subfigure[System response of the integrator system when minimizing the cost by sampling every $5^{th}$ second.]{
        \includegraphics[width=0.39\textwidth]{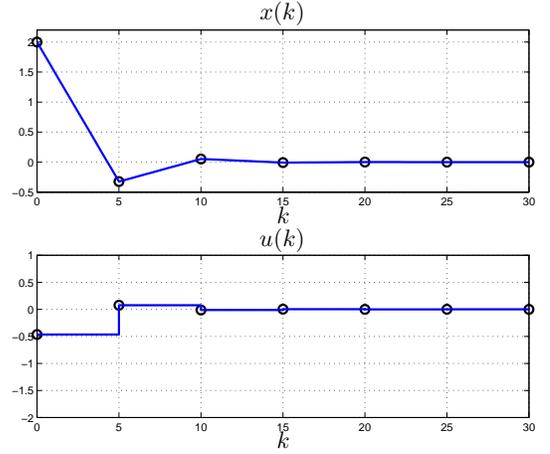}
        \label{fig:scalar_slow}
    }
        \subfigure[System response of the integrator system when minimizing the cost by sampling every second.]{
        \includegraphics[width=0.39\textwidth]{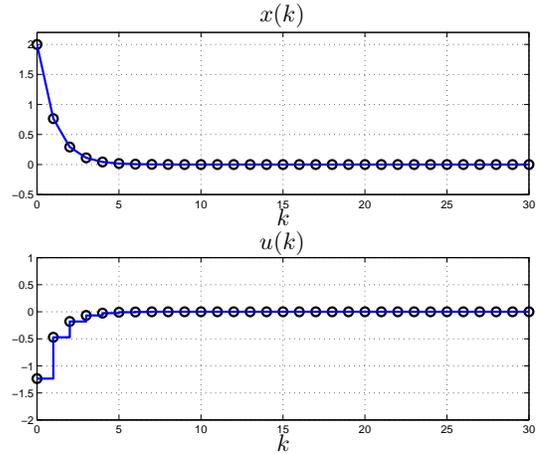}
        \label{fig:scalar_fast}
    }
    \caption{Comparing control performance for different sampling policies.}
\end{figure}

From the above we may conclude that our self-triggered MPC combines the low
communication rate in stationarity of the slow periodic controller with the
quick transient response of the fast periodic sampling. In fact we may, using
our method, recover the transient behaviour of fast periodic sampling at the
communication cost of one extra sample compared to slow periodic sampling. The
reason for this is that the fast sampling rate only is needed in the transient
while we in stationarity can obtain sufficient performance with a lower rate.

\subsection{Comparison to periodic control}\label{sec:compsingle}
To make a more extensive comparison between the self-triggered algorithm and periodic sampling, we sweep the value of the design parameter $\alpha$ in the cost function, and run 100 simulations of 10\,000 samples for each value. To add randomness to the process, the model (\ref{eq:multiprocess}) is extended to include disturbances such that for process $\ell$
\begin{equation}\label{eq:processnoise}
\begin{aligned}
    &x_\ell(k+1)=A_\ell x_\ell(k)+B_\ell u_\ell(k) + E_\ell \omega_\ell(k),
\end{aligned}
\end{equation}
where $\omega_\ell(k) \sim \mathcal{N}(0,\sigma_\ell^2)$ is zero-mean normal distributed with variance $\sigma_\ell^2 = 0.1$ and $E_\ell = 1$. We increase the maximum allowed sampling interval to $p^* = \max \mathcal{I}^0 = 15$. This though has the effect, that $P^{(1)} = 1.83$ whereas $P^{(2)} = 1.74$, such that it is always more favourable to have a sample interval of 2 samples instead of 1, even when $\alpha = 0$. We therefore changed the cost of control to $R=0.1$ such that we allow for larger control signals and force the process to sample at every time instance if $\alpha = 0$ is chosen. 
The initial state is randomly generated as $x_\ell(0) \sim \mathcal{N}(0,\sigma_{x_0,\ell}^2)$ , where $\sigma_{x_0,\ell}^2 = 2\,500$. 


%
\begin{figure}[tp]
    \centering
    \psfrag{SI}[c][][0.8]{Average sampling interval}
    \psfrag{a}[c][][0.8]{$\alpha$}
    \includegraphics[width=0.39\textwidth]{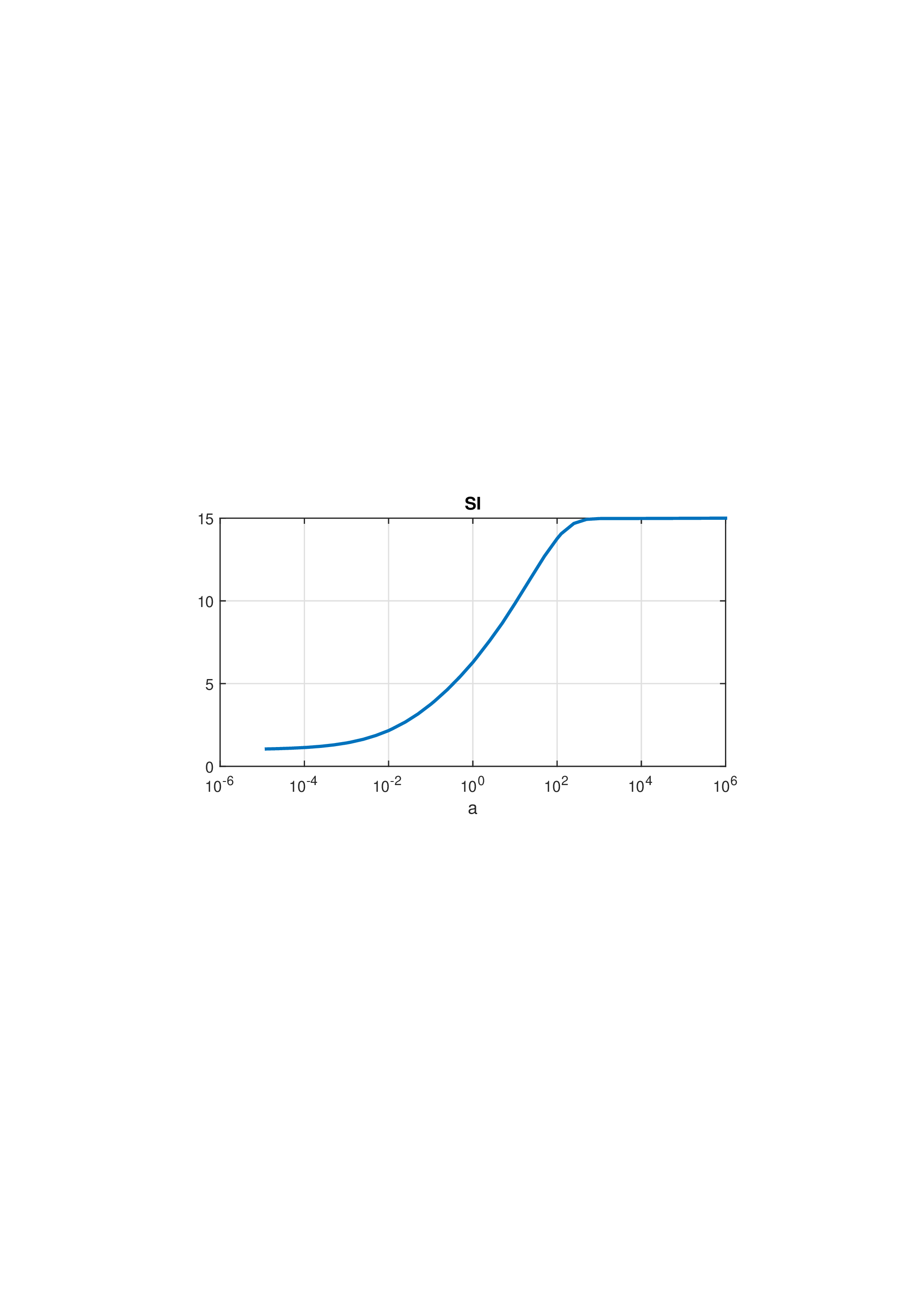}
    \caption{The average sampling interval for different values of $\alpha$ for the integrator system.}
    \label{fig:alpha-sample-interval}
\end{figure}

Fig.~\ref{fig:alpha-sample-interval} shows the average sampling interval during the simulations for different choices of $\alpha$ for the integrator system. The relation between the value of $\alpha$ and the average sampling interval is in general monotonic but highly affected by the system to be controlled and the statistics of the disturbances in the system.

Since we now have the range of values in which we want to sweep $\alpha$, we present a simple periodic MPC cost to which we compare our algorithm. This cost is given by
\begin{multline*}
	J_\ell(x_\ell(k),\mathcal{U}_\ell) = \sum_{r=0}^\infty \sum_{l=0}^{T_s} \left( \|x_\ell(k+l+r T_s) \|_{Q_\ell}^2 \right.\\
	\left.+ \| u_\ell(k+l+r T_s) \|_{R_\ell}^2 \right),
\end{multline*}
which using Definition~\ref{def:lifting} and Lemmas~\ref{lem:sum_collapse} and \ref{lem:P_conversion} can be rewritten as
%
\begin{multline}\label{eq:costperiodic}
	J_\ell(x_\ell(k),\mathcal{U}_\ell) = \sum_{r=0}^\infty \|x_\ell(k+r T_s)\|_{Q_\ell^{(T_s)}}^2 + \|u_\ell(k+r T_s)\|_{R_\ell^{(T_s)}}^2\\
	 + 2x_\ell(k+r T_s)^T N_\ell^{(i)} u_\ell(k+r T_s)
\end{multline}
for sampling period $T_s$ where
\begin{equation*}
	x_\ell(k+T_s) = A_\ell^{(T_s)}x_\ell(k) + B_\ell^{(T_s)}u_\ell(k).
\end{equation*}
When $T_s = 1$, (\ref{eq:costperiodic}) reduces to classical unconstrained linear quadratic control.

The empiric cost for each simulation is calculated by 
\begin{equation}\label{eq:empcost}
 \frac{1}{T} \sum_{k=0}^{T-1} x_\ell(k)^T Q_\ell x_\ell(k) + u_\ell(k)^T R_\ell u_\ell(k),
\end{equation}
where $T = 10\,000$ is the length of the simulation. 

\begin{figure}[tp]
    \centering
    \psfrag{S}[l][][0.8]{Self-triggered}
    \psfrag{P}[l][][0.8]{Periodic}
    \psfrag{AS}[c][][0.8]{Average sampling interval}
    \psfrag{IC}[c][][0.8]{Average empiric cost}
\psfrag{a1}[l][][0.6]{ $\alpha = 0$}    
\psfrag{a2}[l][][0.6]{ $\alpha = 0.05$} 
\psfrag{a3}[l][][0.6]{ $\alpha = 0.25$} 
\psfrag{a4}[l][][0.6]{ $\alpha = 1.3$}  
\psfrag{a5}[l][][0.6]{ $\alpha = 5$}    
\psfrag{a6}[l][][0.6]{ $\alpha = 25$}   
\psfrag{a7}[l][][0.6]{ $\alpha = 50$}   
\psfrag{a8}[l][][0.6]{ $\alpha = 5\cdot 10^2$}
\psfrag{a9}[l][][0.6]{ $\alpha = 1\cdot 10^6$}
    \includegraphics[width=0.39\textwidth]{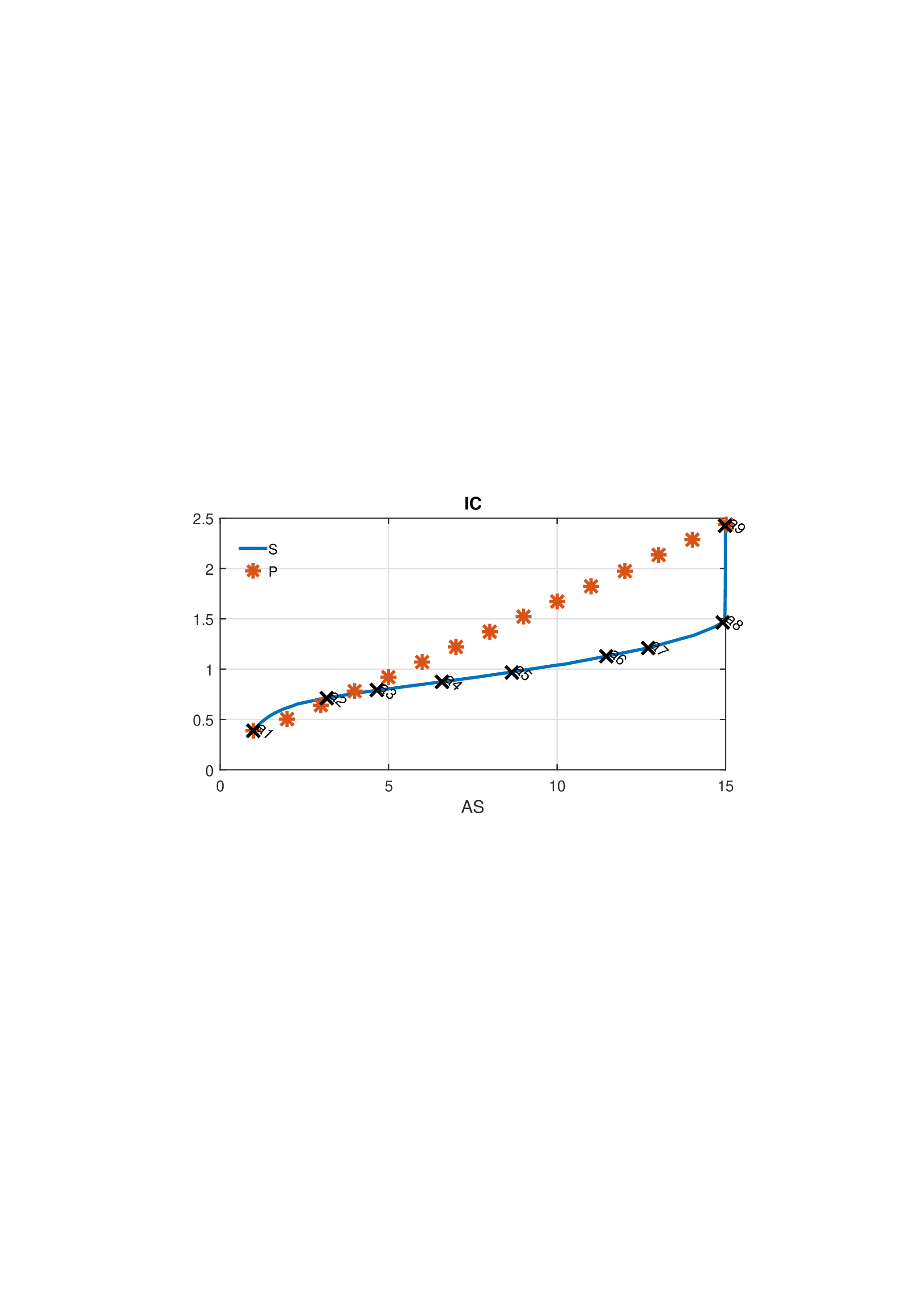}
    \caption{Performance of the integrator using the self-triggered algorithm compared to the simple periodic algorithm for different sampling intervals. 
    Some values of $\alpha$ are marked with the black crosses.}
    \label{fig:self-vs-periodic-single}
\end{figure}

Fig.~\ref{fig:self-vs-periodic-single} shows the average performance that is obtained for different averaged sampling intervals obtained by sweeping $\alpha$ from 0 to $10^6$.
Fig.~\ref{fig:self-vs-periodic-single} illustrates that the self-triggered algorithm performs significantly better than periodic sampling especially when communication costs are medium to high. However, at sampling intervals of 2 and 3 time-steps the self-triggered algorithm performs slightly worse than the periodic algorithm. This is caused by the fact, that if the state of the system is close to zero when sampled, the cost of sampling is much higher than the cost of the state error and control, hence the time until the next sample is taken is large. To avoid this phenomenon, in future work, one could take the statistics of the process noise into account in the cost function. It can further be noticed, that when $\alpha \rightarrow \infty$, the performance of the self-triggered algorithm is very close to the cost of the periodic algorithm. This is as expected, since the cost of $\alpha$ will be greater than the cost of the state and control, which will result in sample intervals of $i=p^* = 15$. This reduces the self-triggered algorithm to periodic control.

\subsection{Multiple Loops} We now continue with
performing a simulation study where we control two systems over the same
network. \edit{We will start by showing a simple example followed by a more extensive performance comparison of the Algorithm~\ref{alg:multiloop} to periodic control}. We will keep the integrator system from Section~\ref{sec:examples} now
denoting it process $\mathcal{P}_1$ with dynamics
$x_1(k+1)=A_1x_1(k)+B_1u_1(k)$ with $(A_1,B_1)=(1,1)$ as before. In addition we
will the control process $\mathcal{P}_2$ which is a double integrator system
which we discretize using sample and hold with sampling time $T_s=1$\,s giving
\begin{equation*}
\begin{aligned}
\underbrace{\left(\begin{array}{c}
           x^1_2(k+1) \\
           x^2_2(k+1)
         \end{array}\right)}_{x_2(k+1)}
         &=
          \underbrace{\left(\begin{array}{cc}
           1 & 0 \\
           1 & 1
         \end{array}\right)}_{A_2}
\underbrace{\left(\begin{array}{c}
           x^1_2(k) \\
           x^2_2(k)
         \end{array}\right)}_{x_2(k)}
+
    \underbrace{\left(\begin{array}{c}
           1 \\
           0.5
         \end{array}\right)}_{B_2}
        u_2(k).
\end{aligned}
\end{equation*}

We wish to control these processes using our proposed multiple loop
self-triggered MPC described in Algorithm~\ref{alg:multiloop}. As we wish to
stabilize these systems we start by checking the conditions of
Theorem~\ref{thm:multistabilitybound} and Corollary~\ref{corr:multistability}.
First we may easily verify that both the pairs $(A_1,B_1)$ and $(A_2,B_2)$ are
controllable. To use the stability results we need
Assumption~\ref{ass:alwayspersistent} to hold, implying that we must choose
$\rate_1=\rate_2=\rate$, $\nomI_1=\nomI_2=\nomI$ and choose $\nomI$ such that
$\{1,2\}\in\nomI$ and $2\leq \rate$. For reasons of performance we wish to
guarantee that the systems are sampled at least every $5\cdot T_s$\,s and
therefore choose $\nomI_1=\nomI_2=\nomI=\{1,2,3,4,5\}$ fulfilling the
requirement above. We also note that $\lambda(A_1)=\{1\}$ and
$\lambda(A_2)=\{1,1\}$ and that hence both system fulfill
Assumption~\ref{ass:pstarIsGamma} for this choice of $\nomI$, implying that
(\ref{eq:stab_rate_cond_multi}) in Theorem~\ref{thm:multistabilitybound} gives
$\optrate=\max\nomI=5$. Thus choosing $\rate=\optrate$ as stated in
Theorem~\ref{thm:multistabilitybound} results in that
Assumption~\ref{ass:alwayspersistent} holds. What now remains to be decided are
the weights $\alpha_\ell$, $Q_\ell$ and $R_\ell$.

For the integrator process $\mathcal{P}_1$ we keep the same tuning as in
Section~\ref{sec:examples} with $Q_1=R_1=1$. Having decided $Q_1$, $R_1$,
$\nomI$ and $\optrate$ we use Theorem~\ref{thm:min_J_multi} to compute the
needed state feedback gains and cost function matrices $(P^\bi_1,L^\bi_1)$
$\forall\,i\in\nomI$ needed by Algorithm~\ref{alg:multiloop}. We also keep
$\alpha_1=0.2$ as it gave a good communication versus performance trade-off.

For the double integrator process $\mathcal{P}_2$ the weights are chosen to be
$Q_2=I$ as we consider both states equally important and $R_2=\frac{1}{10}$ to
favor control performance and allow for larger control signals. Having decided
$Q_2$, $R_2$, $\nomI$ and $\optrate$ we may use Theorem~\ref{thm:min_J_multi}
to compute the needed state feedback gains and cost function matrices
$(P^\bi_2,L^\bi_2)$ $\forall\,i\in\nomI$ needed by
Algorithm~\ref{alg:multiloop}. The sampling cost is chosen to be $\alpha_2 =
1$, as this gives a good tradeoff between control performance and the number of
samples.

We have now fulfilled all the assumptions of both
Theorem~\ref{thm:multistabilitybound} and Corollary~\ref{corr:multistability}.
Hence applying Algorithm~\ref{alg:multiloop} choosing
$\mathcal{I}_\ell(k_\ell)$ according to Theorem~\ref{thm:alwayspersistent} will
asymptotically stabilize both process $\mathcal{P}_1$ and $\mathcal{P}_2$.

Controlling $\mathcal{P}_1$ and $\mathcal{P}_2$ using our multiple loop
self-triggered MPC described in Algorithm~\ref{alg:multiloop} with the above
designed tuning we get the result shown in Fig.~\ref{fig:dual_sys}.
\begin{figure}
\centering
    \psfrag{k}[][][.9]{$k$}
    \psfrag{y}[][][.9]{$x_1(k)$}
    \psfrag{u}[][][.9]{$u_1(k)$}
    \subfigure[System response for process $\mathcal{P}_1$]{
        \includegraphics[width=0.39\textwidth]{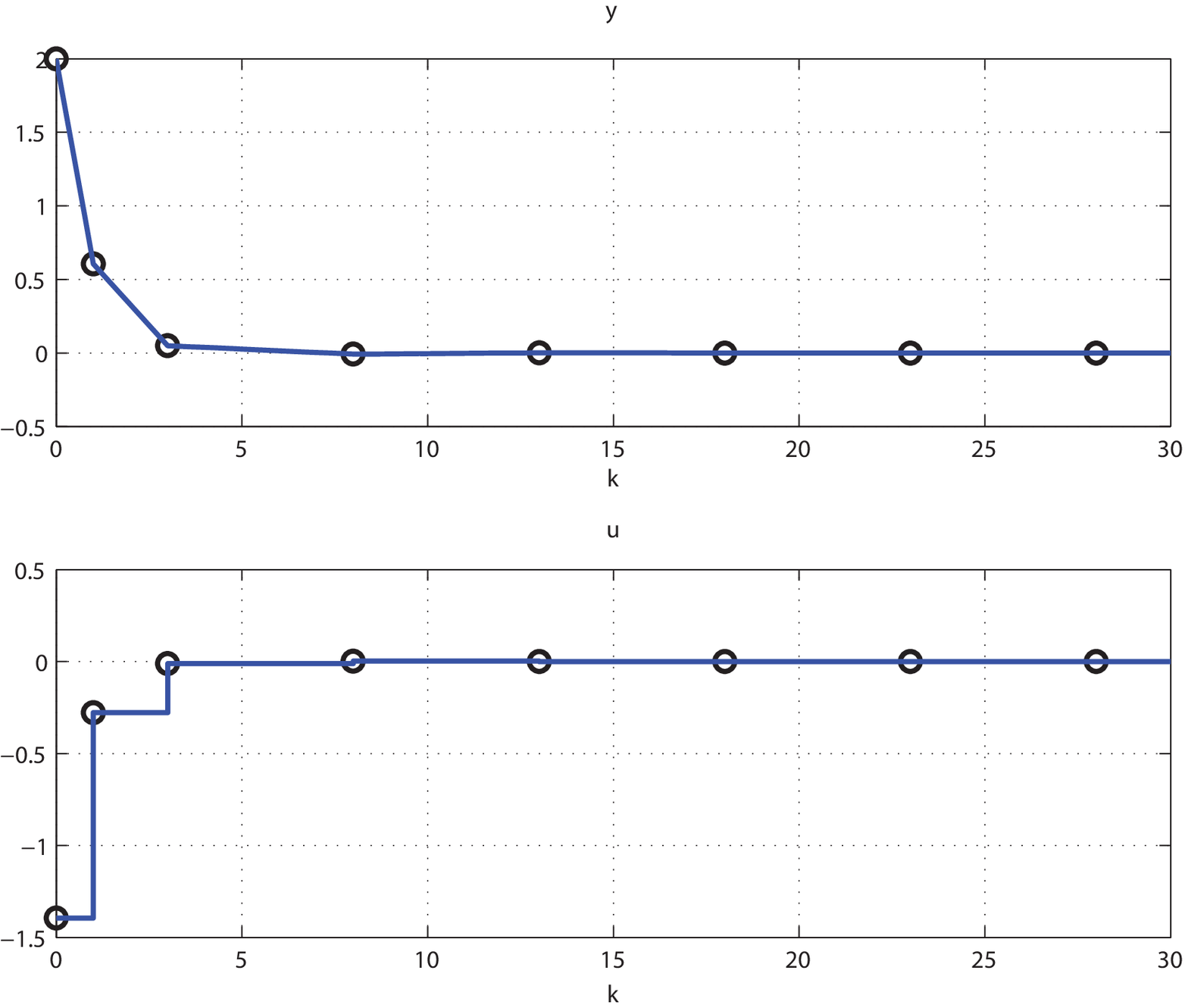}
        \label{fig:dual_sys1}
    }
    \psfrag{y}[][][.9]{$x_2(k)$}
    \psfrag{u}[][][.9]{$u_2(k)$}
    \subfigure[System response for process $\mathcal{P}_2$]{
        \includegraphics[width=0.39\textwidth]{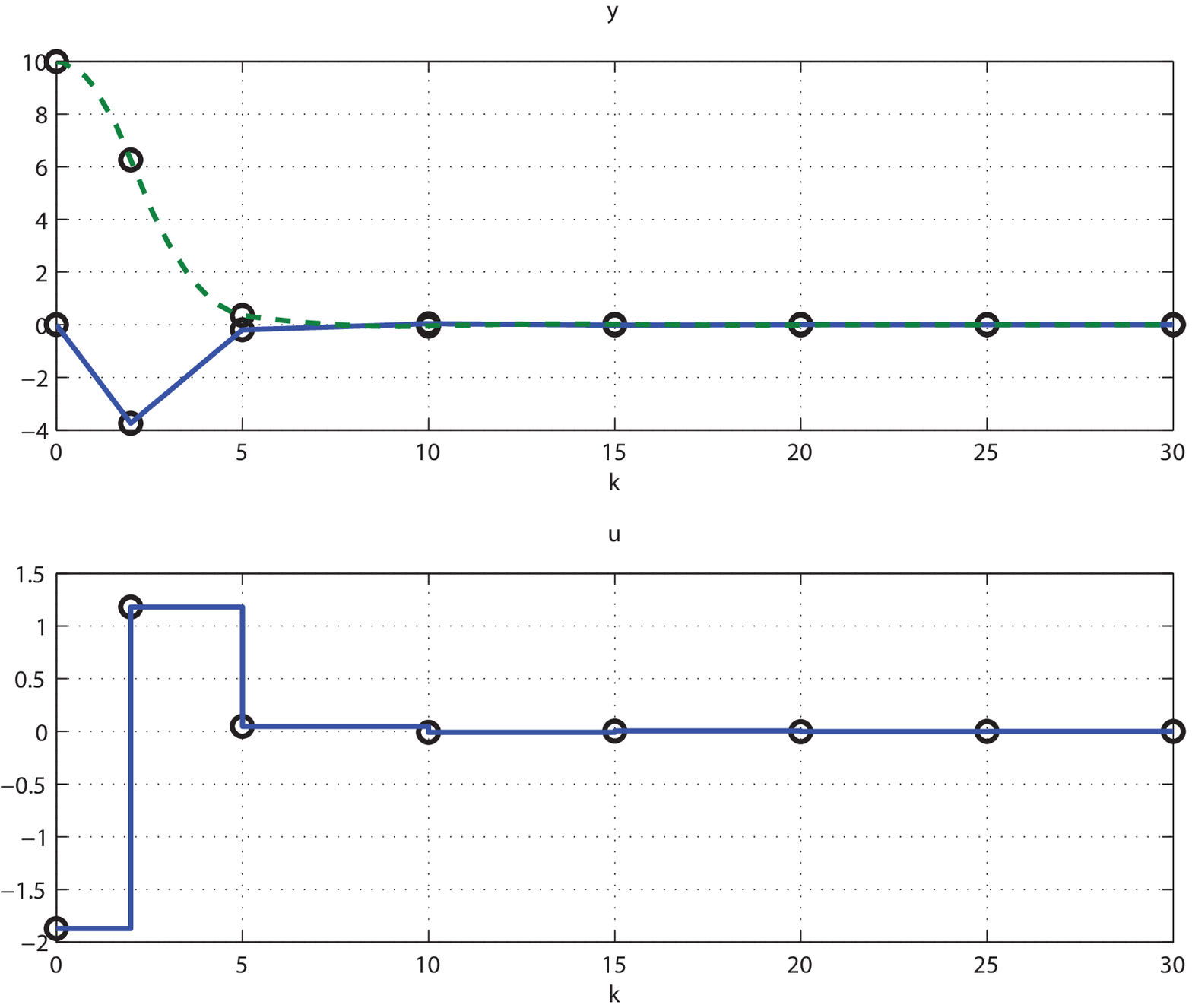}
    }
    \caption{The processes $\mathcal{P}_1$ and $\mathcal{P}_2$ controlled and scheduled on the same network using our multiple loop
self-triggered MPC.}
    \label{fig:dual_sys}
\end{figure}
As expected, the behavior of the controller illustrated in
Section~\ref{sec:examples} carries through also to the case when we have
multiple loops on the network. In fact comparing Fig.~\ref{fig:scalar_adap}
showing how the controller handles process $\mathcal{P}_1$ when controlling it
by itself on the network and Fig.~\ref{fig:dual_sys1} which shows how
$\mathcal{P}_1$ is handled in the multiple loop case we see that they are the
same. Further we see that, as expected, in stationarity the two loops
controlling process $\mathcal{P}_1$ and $\mathcal{P}_2$ both converge to the
sampling rate $\optrate$.

As mentioned previously the controller uses the mechanism in
Theorem~\ref{thm:alwayspersistent} to choose the set of feasible times to wait
until the next sample. In Fig.~\ref{fig:backofftimes} we can see how the
resulting sets $\mathcal{I}_\ell(k_\ell)$ look in detail. At time $k=0$
loop~$1$ gets to run Algorithm~\ref{alg:multiloop} first. As sensor
$\mathcal{S}_2$ is not scheduled for any transmissions yet
$\mathcal{I}_1(0)=\nomI$ from which the controller chooses $I_1(0)=1$. Then
loop~$2$ gets to run Algorithm~\ref{alg:multiloop} at time $k=0$. As sensor
$\mathcal{S}_1$ now is scheduled for transmission at time $k=0+I_1(0)=1$,
Theorem~\ref{thm:alwayspersistent} gives
$\mathcal{I}_2(0)=\nomI\setminus\{1\}$. From which the controller chooses
$I_2(0)=2$. The process is then repeated every time a sample is transmitted to
the controller, giving the result in Fig.~\ref{fig:backofftimes}.
\begin{figure}
\centering
    \psfrag{k}[][][.9]{$k$}
    \psfrag{y}[][][.9]{$\mathcal{I}_1(\cdot)$}
    \psfrag{z}[][][.9]{$\mathcal{I}_2(\cdot)$}
    \psfrag{i}[][][.9]{$\nomI$}
    \includegraphics[width=0.39\textwidth]{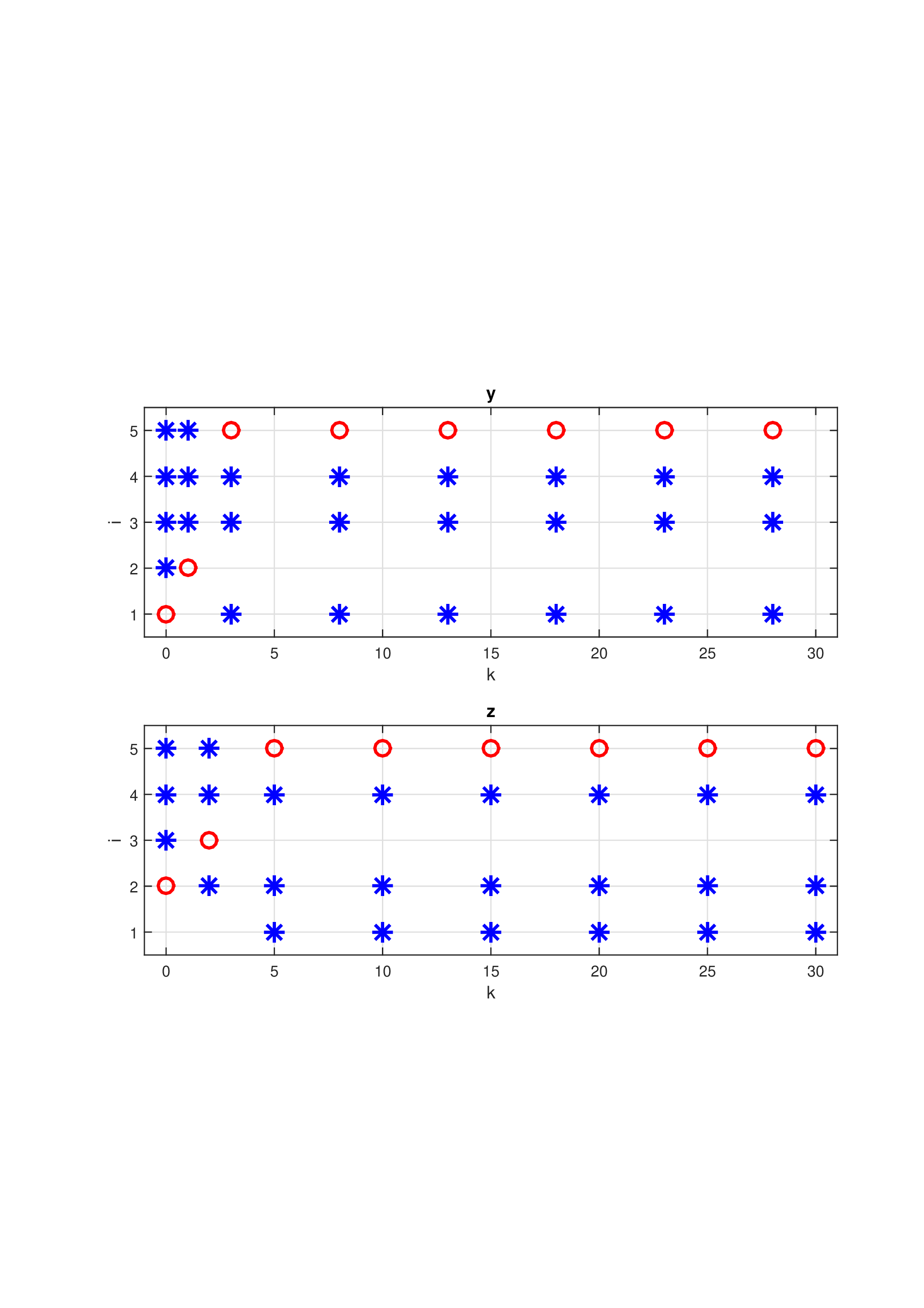}
    \caption{\edit{The sets $\mathcal{I}_\ell(\cdot)$ of feasible times to wait until the next sample for loop~$1$ and loop~$2$. The optimal time to wait $I_\ell(\cdot)$ is marked by red circles whereas the other feasible times are starred.}}
    \label{fig:backofftimes}
\end{figure}
As seen both the set $\mathcal{I}_\ell(\cdot)$ and the optimal time to wait
$I_\ell(\cdot)$ converges to some fixed value as the state of the corresponding
process $\mathcal{P}_\ell$ converges to zero.

\subsection{Comparison to periodic control}\label{sec:compmultiple}

For a more thorough performance comparison, we simulate the systems using Algorithm~\ref{alg:multiloop} and compare them to the periodic algorithm that uses the cost function (\ref{eq:costperiodic}). The single and double integrator processes ($\mathcal{P}_1$ and $\mathcal{P}_2$) are simulated using the parameters mentioned in Section~\ref{sec:numerical_results}. The variance of the disturbances for both processes are set to $\sigma_\ell^2 = 0.1,\,\forall \ell$, the variance of the initial state to $\sigma_{x_0,\ell}^2 = 25,\, \forall \ell$ and  $E_2 = [1,\ 1]^T$ in (\ref{eq:processnoise}). The value of $\alpha$ is identical for both processes in each simulation, such that $\alpha_1 = \alpha_2$. Further $R_1 = 0.1$. The simulation for both algorithms is initialized as described in Remark~\ref{rem:multipleinit}.

Fig~\ref{fig:self-vs-periodic-2N5} shows the performance for the processes, $\mathcal{P}_1$ and $\mathcal{P}_2$, calculated by averaging (\ref{eq:empcost}) over 100 simulations each of length 10\,000 for different values of $\alpha$ when $p_1^* = p_2^* = 5$. Fig.~\ref{fig:self-vs-periodic-2N15} shows the performance when $p_1^* = p_2^* = 15$. The cost of each process is normalized with respect to its largest cost for easier viewing. 

\begin{figure}[tp]
    \centering
    \psfrag{S1}[l][][0.8]{Self-triggered $\mathcal{P}_1$}
    \psfrag{S2}[l][][0.8]{Self-triggered $\mathcal{P}_2$}
    \psfrag{P1}[l][][0.8]{Periodic $\mathcal{P}_1$}
    \psfrag{P2}[l][][0.8]{Periodic $\mathcal{P}_2$}    
    \psfrag{AS}[c][][0.8]{Average sampling interval}
    \psfrag{IC}[c][][0.8]{Normalized average empiric cost}
\psfrag{a1}[l][][0.6]{ $\alpha = 0$}   
\psfrag{a2}[l][][0.6]{ $\alpha = 0.13$}
\psfrag{a3}[l][][0.6]{ $\alpha = 1$}   
\psfrag{a4}[l][][0.6]{ $\alpha = 5$}  
\psfrag{b1}[l][][0.6]{ $\alpha = 0.05$}   
\psfrag{b2}[l][][0.6]{ $\alpha = 5$}      
\psfrag{b3}[l][][0.6]{ $\alpha = 50$}     
\psfrag{b4}[l][][0.6]{ $\alpha = 2.5e+04$}
    \includegraphics[width=0.39\textwidth]{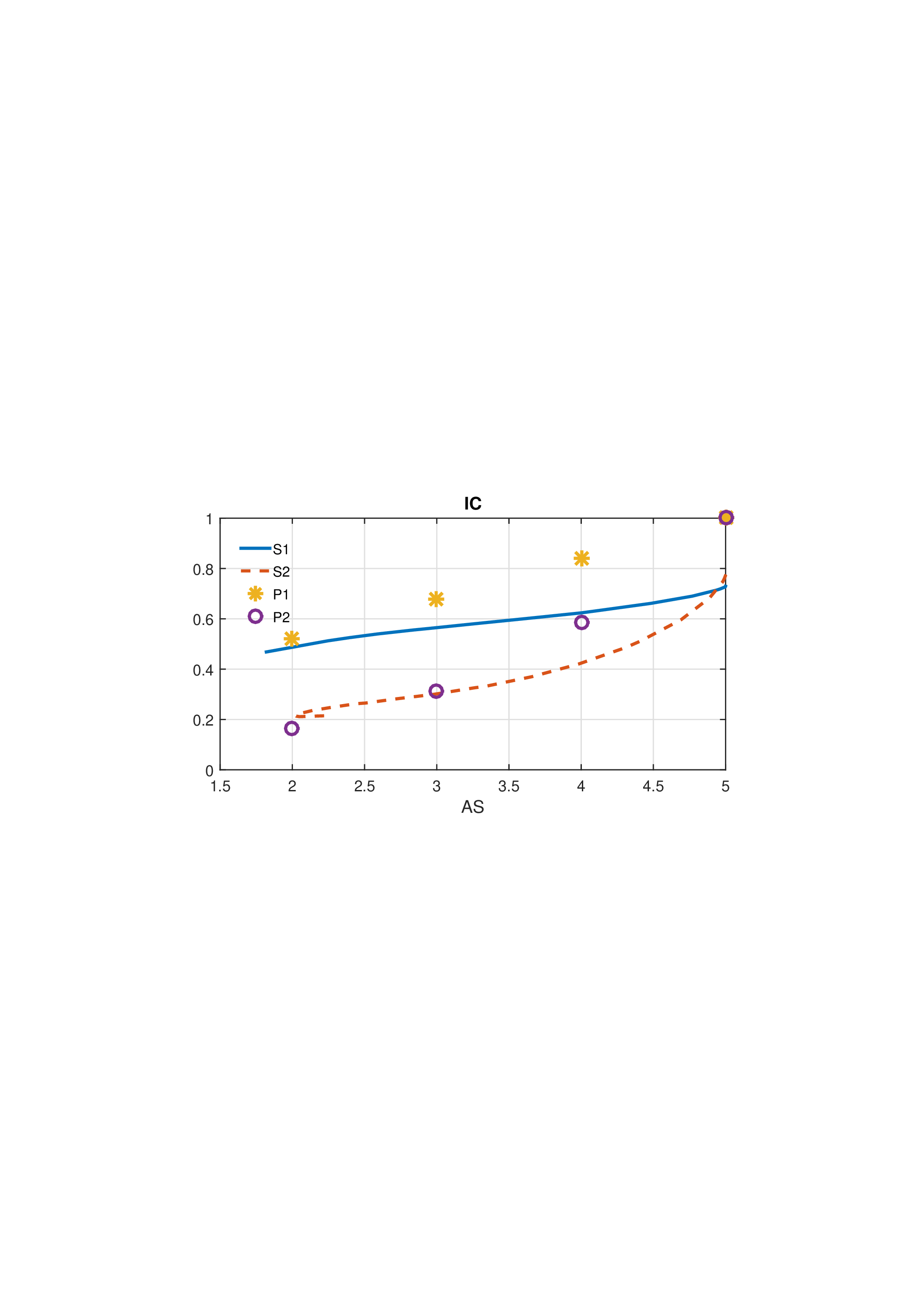}
    \caption{Performance for both processes for the self-triggered algorithm compared to a simple periodic algorithm for different sampling intervals. The max sampling interval $p^* = 5$. This is done by varying $\alpha$ from 0 to $5\cdot 10^4$ and calculating the average amount of samples for every value of $\alpha$.}
    \label{fig:self-vs-periodic-2N5}
\end{figure}

\begin{figure}[tp]
    \centering
    \psfrag{S1}[l][][0.8]{Self-triggered $\mathcal{P}_1$}
    \psfrag{S2}[l][][0.8]{Self-triggered $\mathcal{P}_2$}
    \psfrag{P1}[l][][0.8]{Periodic $\mathcal{P}_1$}
    \psfrag{P2}[l][][0.8]{Periodic $\mathcal{P}_2$}    
    \psfrag{AS}[c][][0.8]{Average sampling interval}
    \psfrag{IC}[c][][0.8]{Normalized average empiric cost}
\psfrag{a1}[l][][0.6]{ $\alpha = 0.005$}  
\psfrag{a2}[l][][0.6]{ $\alpha = 0.13$}   
\psfrag{a3}[l][][0.6]{ $\alpha = 1$}      
\psfrag{a4}[l][][0.6]{ $\alpha = 2.5$}    
\psfrag{a5}[l][][0.6]{ $\alpha = 10$}     
\psfrag{a6}[l][][0.6]{ $\alpha = 25$}     
\psfrag{a7}[l][][0.6]{ $\alpha = 2.5e+02$}
\psfrag{b1}[l][][0.6]{ $\alpha = 0.05$}   
\psfrag{b2}[l][][0.6]{ $\alpha = 1$}      
\psfrag{b3}[l][][0.6]{ $\alpha = 50$}     
\psfrag{b4}[l][][0.6]{ $\alpha = 5e+02$}  
\psfrag{b5}[l][][0.6]{ $\alpha = 2.5e+03$}
\psfrag{b6}[l][][0.6]{ $\alpha = 1.3e+04$}
\psfrag{b7}[l][][0.6]{ $\alpha = 5e+04$}  
    \includegraphics[width=0.39\textwidth]{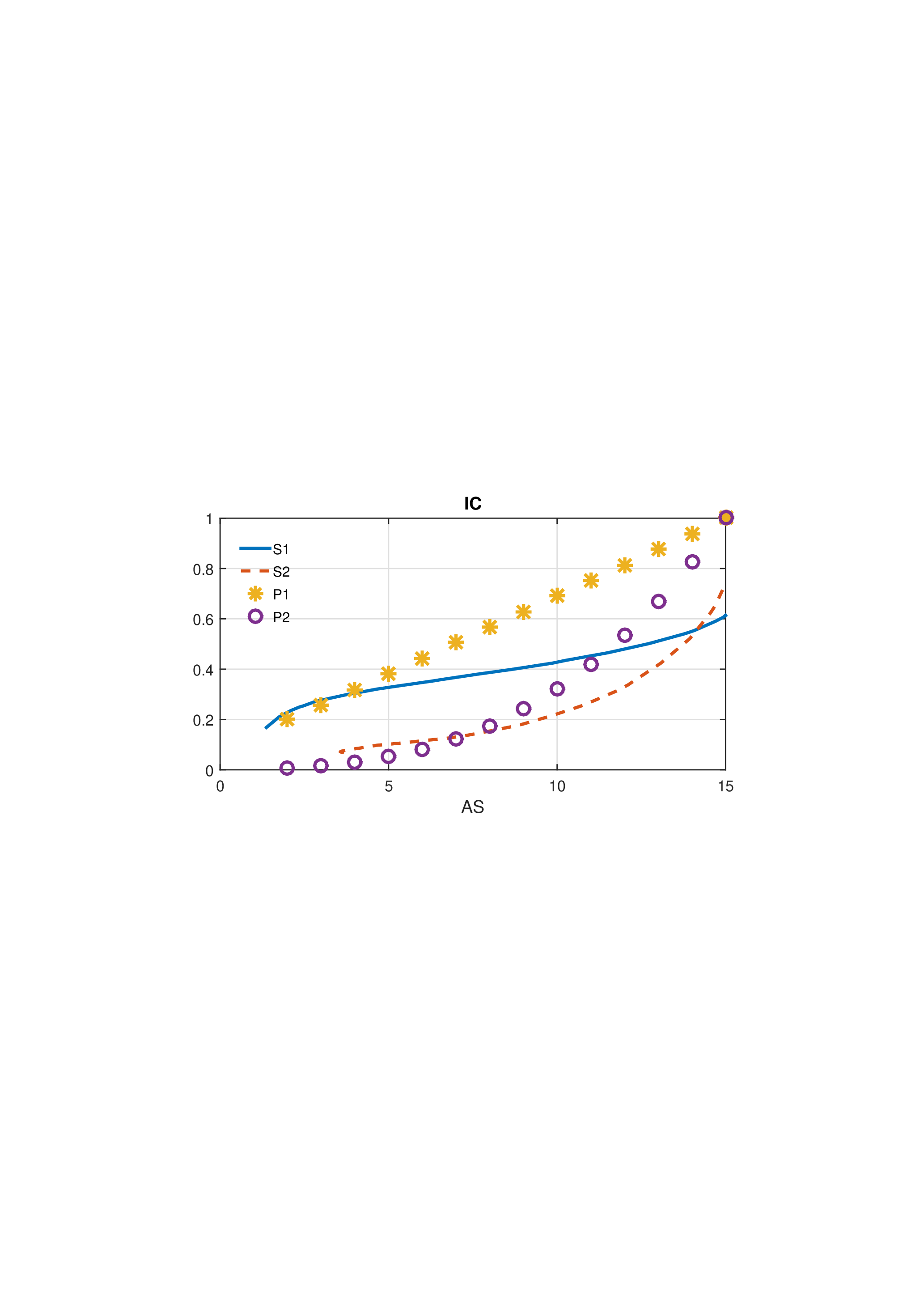}
    \caption{Performance for both processes for the self-triggered algorithm compared to a simple periodic algorithm for different sampling intervals. The max sampling interval $p^* = 15$. This is done by varying $\alpha$ from 0 to $10^9$ and calculating the average amount of samples for every value of $\alpha$.}
    \label{fig:self-vs-periodic-2N15}
\end{figure}

Both figures show that the self-triggered algorithm in general outperforms periodic sampling. The performance margin increases as the sampling interval increases. 
Fig.~\ref{fig:self-vs-periodic-2N5} shows that $\mathcal{P}_1$ when $\alpha = 0$ samples every 1.8 time steps on average, whereas $\mathcal{P}_2$ only samples every 2.2 time steps. When $\alpha$ increases both processes sample almost every 2 time steps. The reason for this is that the cost of the state for the down sampled systems in some cases is lower. 

The worse performance of the self-triggered algorithm at lower sampling intervals is more significant in Fig.~\ref{fig:self-vs-periodic-2N15} where the performance of process $\mathcal{P}_1$ shows a similar performance as in Fig.~\ref{fig:self-vs-periodic-single}. The lowest average sampling interval for process $\mathcal{P}_2$ is 3.6 time steps when $\alpha = 0$.  The self-triggered algorithm though significantly outperforms the periodic algorithm when the average sampling interval increases.

As $\alpha \rightarrow \infty$ the cost of sampling forces the self-triggered algorithm to behave similar to the periodic sampled algorithm. Therefore the performance gap between the periodic and self-triggered algorithms narrows, as the average sampling interval is close to $p^*$. 

\section{Conclusions}\label{sec:conclusions}
We have studied joint design of control and adaptive scheduling of multiple
loops, and have presented a method which at every sampling instant computes the
optimal control signal to be applied as well as the optimal time to wait before
taking the next sample. It is shown that this control law may be realized using
MPC and computed explicitly. The controller is also shown to be stabilizing
under mild assumptions. Simulation results show that the use of the presented
control law \edit{in most cases} may help reducing the required amount of communication without
almost any loss of performance compared to fast periodic sampling.

In the multiple loop case we have also presented an algorithm for guaranteeing
conflict free transmissions. It is shown that under mild assumptions there
always exists a feasible schedule for the network. \edit{The complexity of the multiple loop self-triggered MPC and the corresponding scheduling algorithm scales linearly in the number of loops.}

An interesting topic for future research is to further investigate the
complexity and possible performance increase for such an extended formulation.

\edit{Intuitively, additional performance gains can be achieved when the cost function takes process noise into account as well. Exploiting this could be of interest in future research.}

\edit{Another topic for future research would be to apply the presented framework to constrained control problems.}

\bibliographystyle{IEEEtran}
\bibliography{TCST2013}

\appendix

\subsection{Proof of Lemma~\ref{lem:P_conversion}}
\label{app:P_conversion}
\begin{proof}
Following Lemma~\ref{lem:sum_collapse} the problem is equivalent to
\begin{equation*}
\begin{aligned}
\min_{\mathcal{U}(i)} \sum_{r=0}^\infty\bigg(& x(k+i+r\cdot \rate)^TQ^\bir x(k+i+r\cdot \rate)\\
 +& u(k+i+r\cdot \rate)^TR^\bir u(k+i+r\cdot \rate)\\
 +& 2 x(k+i+r\cdot \rate)^TN^\bir u(k+i+r\cdot \rate) \bigg)\\
\end{aligned}
\end{equation*}
with
\begin{equation*}
x(k+i+(r+1)\cdot \rate) = A^\bir x(k+i+r\cdot \rate) + B^\bir u(k+i+r\cdot \rate).
\end{equation*}
This problem has the known optimal solution, see e.g. \cite{bert95},
$\|x(k+i)\|^2_{P^\bir}$. Where $P^\bir$ is given by the Riccati equation
(\ref{eq:ricc_periodic}), which has a solution provided that $0<R^\bir$,
implied by $0<R$, and $0 < Q^\bir$, implied by $0 < Q$, and that the pair
$(A^\bir,B^\bir)$ is controllable. Exactly what is stated in the Lemma.
\end{proof}

\subsection{Proof of Theorem~\ref{thm:min_J}}
\label{app:min_J}
\begin{proof}
From the Theorem we have that $0<Q$, $0<R$ and that the pair $(A^\bir,B^\bir)$
is controllable. Thus we may use Lemma~\ref{lem:P_conversion} to express the
cost (\ref{eq:J_w_constraints}) as
\begin{multline*}
    J(x(k),i,\mathcal{U}(i)) =\frac{\alpha}{i}+ \| x(k+i)\|^2_{P^\bir} \\
    + \sum_{l=0}^{i-1}\bigg(\| x(k+l)\|^2_Q + \|u(k)\|^2_R \bigg).
\end{multline*}
Now applying Lemma~\ref{lem:sum_collapse} we get
\begin{multline*}
    J(x(k),i,\mathcal{U}(i)) =\frac{\alpha}{i}+ \|x(k+i)\|^2_{P^\bir} \\
    + x(k)^TQ^\bi x(k)+ u(k)^TR^\bi u(k)+ 2x(k)^TN^\bi u(k).
\end{multline*}
with $x(k+i) =  A^\bi x(k)  +  B^\bi u(k)$. Minimizing
$J(x(k),i,\mathcal{U}(i))$ now becomes a finite horizon optimal control problem
with one prediction step into the future. This problem has the well defined
solution (\ref{eq:opt_open_loop_cost}), see \eg, \cite{bert95}, given by
iterating the Riccati equation (\ref{eq:ricc_i_step}).
\end{proof}

\subsection{Proof of Theorem~\ref{thm:stabilitybound}}\label{app:stabilitybound}
\begin{proof}
By assumption $(A,B)$ is controllable. Together with the choice of $\optrate$
this, via Lemma~\ref{lem:downsample}, implies that $(A^\biro,B^\biro)$ is
controllable. Further let $\hat x(k'|k)$ denote an estimate of $x(k')$, given
all available measurements up until time $k$. Defining
\begin{equation}\label{eq:S_i_def}
 \|\hat x(k|k)\|^2_{S^\bi} \triangleq \sum_{l=0}^{i-1}\bigg(\|\hat x(k+l|k)\|^2_Q + \|\hat u(k|k)\|^2_R \bigg)
\end{equation}
we may, since by assumption $0<Q$ and $0<R$, use Lemma~\ref{lem:P_conversion}
and Theorem~\ref{thm:min_J} to express $V_k$, the optimal value of the cost
(\ref{eq:J_w_constraints}) at the current sampling instant $k$, as
\begin{equation}\label{eq:V_k}
\begin{aligned}
    V_k \triangleq & \min_{{i\in\nomI},\hat u(k|k)} J(x(k),i,\hat u(k|k))\\
    =&\min_{i\in\nomI}\frac{\alpha}{i}+ \|\hat x(k+i|k)\|^2_{P^\biro} + \|\hat x(k|k)\|^2_{S^\bi}\\
    =&\min_{i\in\nomI}\frac{\alpha}{i}+ \|\hat x(k|k)\|^2_{P^\bi}.
    \end{aligned}
\end{equation}
We will use $V_k$ as a Lyapunov-like function. Assume that $V_{k+i}$ is the
optimal cost at the next sampling instant $k+i$. Again using
Theorem~\ref{thm:min_J} we may express it as
\begin{equation}\label{eq:V_k_i}
\begin{aligned}
    V_{k+i} \triangleq & \min_{{j\in\nomI},\hat u(k+i|k+i)} J(x(k+i),j,\hat u(k+i|k+i))\\
    \leq&\min_{\hat u(k+i|k+i)} J(x(k+i),j=\optrate,\hat u(k+i|k+i))\\
    =&\frac{\alpha}{\optrate}+ \|\hat x(k+i|k+i)\|^2_{P^\biro} \\
    =& \frac{\alpha}{\optrate}+ \|\hat x(k+i|k)\|^2_{P^\biro}.
    \end{aligned}
\end{equation}
Where the inequality comes from the fact that choosing $j=\optrate$ is
sub-optimal. Taking the difference we get
\begin{equation*}
V_{k+i} - V_k \leq \frac{\alpha}{\optrate}-\frac{\alpha}{i} - \|\hat x(k|k)\|^2_{S^\bi}
\end{equation*}
which in general is not decreasing. However we may use the following idea to
bound this difference: Assume that there $\exists\, \epsilon\in(0,1]$ and
$\beta \in\mathbb{R^+}$ such that we may write
\begin{equation*}
V_{k+i} - V_k \leq -\epsilon V_k + \beta,
\end{equation*}
\edit{for all $k,k+i \in \mathcal{D}$, see (\ref{eq:V_k}) and (\ref{eq:set_of_samples}).}
Thus, at $l$ sampling instances into the
future, which happens at let's say time $k+l'$, we have that
\begin{equation*}
V_{k+l'} \leq (1-\epsilon)^l \cdot V_k + \beta \cdot \sum_{r=0}^{l-i}(1-\epsilon)^l.
\end{equation*}
Since $\epsilon \in (0,1]$ this is equivalent to
\begin{equation*}
V_{k+l'} \leq (1-\epsilon)^l \cdot V_k + \beta \cdot \frac{1-(1-\epsilon)^l}{1-(1-\epsilon)},
\end{equation*}
which as $l\rightarrow\infty$ gives us an upper bound on the cost function,
$V_{k+l'} \leq {\beta}/{\epsilon}$. Applying this idea on our setup we should
fulfil
\begin{multline*}
\frac{\alpha}{\optrate}-\frac{\alpha}{i} - \|\hat x(k|k)\|_{S^\bi} \\
\leq-\epsilon\frac{\alpha}{i}- \epsilon\|\hat x(k+i|k)\|^2_{P^\biro} - \epsilon
\|\hat x(k|k)\|^2_{S^\bi} + \beta.
\end{multline*}
Choosing $\beta \triangleq {\alpha}/{\optrate}-(1-\epsilon){\alpha}/{\maxI}$ we have
fulfilment if
\begin{equation}\label{eq:epsilon_S}
\begin{aligned}
\epsilon\Big(\|\hat x(k+i|k)\|^2_{P^\biro} + \|\hat x(k|k)\|^2_{S^\bi}\Big)   \leq \|\hat x(k|k)\|^2_{S^\bi}.
    \end{aligned}
\end{equation}
Clearly there $\exists \epsilon \in(0,1]$ such that the above relation is
fulfilled if $0<\|\hat x(k|k)\|^2_{S^\bi}$. For the case $\|\hat x(k|k)\|^2_{S^\bi}=0$ we
must, following the definition (\ref{eq:S_i_def}) and the assumption $0<Q$,
have that $\hat x(k|k)=0$ and $\hat x(k+i|k)=0$ and hence the relation is
fulfilled also in this case. Using the final step in (\ref{eq:V_k}) we may
express (\ref{eq:epsilon_S}) in easily computable quantities giving the
condition
\begin{equation*}
\begin{aligned}
\|\hat x(k+i|k)\|^2_{P^\biro} \leq (1-\epsilon)\|\hat x(k|k)\|^2_{P^\bi}.
    \end{aligned}
\end{equation*}
As this should hold $\forall x$ and $\forall i\in\nomI$ we must fulfill
\begin{equation*}
 \big( A^\bi- B^\bi L^\bi\big)^TP^\biro \big( A^\bi- B^\bi L^\bi\big) \leq (1-\epsilon)P^\bi
\end{equation*}
which is stated in the Theorem. Summing up we have
\begin{equation*}
V_{k+l'} \leq \frac{\alpha}{\epsilon}\bigg(\frac{1}{\optrate}-(1-\epsilon)\frac{1}{\maxI}\bigg)
\end{equation*}
which  is minimized by maximizing $\epsilon$. From the definition of the cost
(\ref{eq:J}) we may also conclude that ${\alpha}/{\maxI} \leq V_{k+l'}$. With
\begin{equation*}
V_{k+l'} = \min_{i\in\nomI}\bigg(\frac{\alpha}{i}+ \|\hat x(k+l'|k+l')\|^2_{P^\bi}\bigg)
\end{equation*}
we may conclude that
\begin{equation*}
\frac{\alpha}{\maxI}\leq \lim_{k\rightarrow\infty} \min_{i\in\nomI}\bigg(\frac{\alpha}{i}+ \|\hat x(k|k)\|^2_{P^\bi}\bigg) \leq \frac{\alpha}{\epsilon}\bigg(\frac{1}{\optrate}-(1-\epsilon)\frac{1}{\maxI}\bigg).
\end{equation*}
\end{proof}

\subsection{Proof of Lemma~\ref{lem:pstarIsGamma}}\label{app:pstarIsGamma}
\begin{proof}
From (\ref{eq:stab_rate_cond}) it is clear that $\optrate=\maxI$ if
$\nexists\lambda\in\lambda(A)$ except $\lambda=1$ such that $\lambda^\maxI=1$.
In polar coordinates we have that
$\lambda=|\lambda|\exp({j\cdot\angle\lambda})$ implying
$\lambda^\maxI=|\lambda|^\maxI \exp({j\cdot\maxI\cdot\angle\lambda})=1$ may
only be fulfilled if $|\lambda|=1$ and $\angle\lambda=\frac{2\pi}{\maxI}\cdot
n$ for some $n\in\mathbb{N}^+$, which contradicts
Assumption~\ref{ass:pstarIsGamma}.
\end{proof}

\subsection{Proof of Corollary~\ref{corr:stability}}\label{app:stabilitycorr}
\begin{proof}
From Theorem~\ref{thm:stabilitybound} we have that as $k\rightarrow\infty$
\begin{equation*}
\frac{\alpha}{\maxI}\leq \min_{i\in\nomI}\bigg(\frac{\alpha}{i}+ \|x(k)\|^2_{P^\bi}\bigg) \leq \frac{\alpha}{\epsilon}\bigg(\frac{1}{\optrate}-(1-\epsilon)\frac{1}{\maxI}\bigg)
\end{equation*}
which as $\optrate=\maxI$, given by Lemma~\ref{lem:pstarIsGamma}, simplifies to
\begin{equation*}
\frac{\alpha}{\maxI}\leq \min_{i\in\nomI}\bigg(\frac{\alpha}{i}+ \|x(k)\|^2_{P^\bi}\bigg) \leq \frac{\alpha}{\maxI}
\end{equation*}
independent of $\epsilon$. Implying that as $k\rightarrow\infty$ we have that
$i=\gamma$ and $\|x(k)\|^2_{P^\bi} = 0$, since $0<P^\bi$ provided $0<Q$ this
implies $x(k)=0$ independent of $\alpha$. In the case $\alpha=0$ the bound from
Theorem~\ref{thm:stabilitybound} simplifies to that as $k\rightarrow\infty$
\begin{equation*}
0\leq \min_{i\in\nomI}\|x(k)\|^2_{P^\bi} \leq 0
\end{equation*}
and hence for the optimal $i$ we have $\|x(k)\|^2_{P^\bi}=0$ implying $x(k)=0$
as above.
\end{proof}

\subsection{Proof of Lemma~\ref{lem:persistent}}\label{app:persistent}
\begin{proof}
When loop~$q$ was last sampled at time $k_q<k_\ell$ it was optimized over
$\mathcal{I}_q(k_q)$ and found the optimal feasible time until the next sample
$I_q(k_q)$. The loop then reserved the infinite sequence
\begin{multline*}
S_q(k_q)=\{k_q+I_q(k_q),\ k_q+I_q(k)+\rate_q,\\
 k_q+I_q(k_q)+2\rate_q,\ k_q+I_q(k_q)+3\rate_q,\ \ldots\}.
\end{multline*}
When loop~$\ell$ now should choose $I_\ell(k_\ell)$ it must be able to reserve
(\ref{eq:samplingpattern}). To ensure that this is true it must choose
$I_\ell(k_\ell)$ so that $S_\ell(k_\ell)\cap S_q(k_q)=\emptyset$,
$\forall\,q\in\loopset\setminus\{\ell\}$. This holds if we have that
\begin{equation*}
\begin{aligned}
&k_\ell+I_\ell(k_\ell)+m\cdot\rate_\ell \neq k_q+I_q(k_q)+n\cdot\rate_q, \\
  &\forall\,m,n\in\mathbb{N},\, \forall\,q\in\loopset\setminus\{\ell\}
\end{aligned}
\end{equation*}
Simplifying the above condition we get conditions on the feasible values of
$I_\ell(k_\ell)$
\begin{equation*}
\begin{aligned}
I_\ell(k_\ell)&\neq \Big(k_q+I_q(k_q)\Big)-k_\ell+n\cdot\rate_q-m\cdot\rate_\ell, \\
 &\forall\,m,n\in\mathbb{N},\, \forall\,q\in\loopset\setminus\{\ell\}
\end{aligned}
\end{equation*}
Noticing that $\Big(k_q+I_q(k_q)\Big)$ is the next transmission of loop~$q$ we
denote it $k_q^\text{next}$, giving the statement in the lemma.
\end{proof}

\subsection{Proof of Theorem~\ref{thm:alwayspersistent}}\label{app:alwayspersistent}
\begin{proof}
Assuming $\rate_\ell=\rate$ and $\nomI_\ell=\nomI$ for all loops
Lemma~\ref{lem:persistent} gives $\mathcal{I}_\ell(k_\ell)$ as stated in the
theorem. Since by assumption $\loopset\subseteq\nomI$ we know that
\begin{equation*}
\begin{aligned}
\mathcal{I}_\ell(k_\ell)&\supseteq\{i\in\loopset|i\neq k_q^\text{next}-k_\ell+r\cdot \rate,\,r\in\mathbb{Z},\, q\in\loopset\setminus\{\ell\} \}.
\end{aligned}
\end{equation*}
Further, since $\max\loopset\leq\rate$ we know that the set
\begin{equation*}
\begin{aligned}
\{i\in\loopset|i = k_q^\text{next}-k_\ell+r\cdot \rate,\,r\in\mathbb{Z}\}
\end{aligned}
\end{equation*}
contains at most one element for a given loop $q$. Thus,
\begin{equation*}
\begin{aligned}
\{i\in\loopset|i\neq k_q^\text{next}-k_\ell+r\cdot \rate,\,r\in\mathbb{Z},\, q\in\loopset\setminus\{\ell\} \}
\end{aligned}
\end{equation*}
contains at least one element as
$q\in\loopset\setminus\{\ell\}\subset\loopset$. Hence
$\mathcal{I}_\ell(k_\ell)$ always contains at least one element
$\forall\,k_\ell$, and thus there always exists a feasible time to wait.
\end{proof}

\subsection{Proof of Theorem~\ref{thm:multistabilitybound}}\label{app:multistabilitybound}
\begin{proof}
Direct application of Theorem~\ref{thm:stabilitybound} on each loop. The
corresponding proof carries through as the choice of $\mathcal{I}_\ell(k)$
together with the remaining assumptions guarantees that we may apply
Theorem~\ref{thm:min_J_multi} on the feasible values of $i$ in every step, thus
the expression for $V_k$ in (\ref{eq:V_k}) always exists. The critical step is
that the upper bound on $V_{k+i}$ in (\ref{eq:V_k_i}) must exist, \ie, the
choice $j=\optrate$ must be feasible. This is also guaranteed by the choice of
$\mathcal{I}_\ell(k)$, see Remark~\ref{rem:pstarInNext}.
\end{proof}

\end{document}